\newcommand\curRevision{1}
\newcommand{\revdtext}[2][1]{\ifthenelse{\equal{#1}{\curRevision}}{\textcolor{blue}{#2}}{#2}}
\pgfplotsset{compat=1.17}
\DeclarePairedDelimiterX\pSet[1]{\{}{\}}{%
  
  #1
}
    \providecommand{\customgenericname}{}
    \newcommand{\newcustomtheorem}[2]{%
      \newenvironment{#1}[1]
      {%
        \renewcommand\customgenericname{#2}%
        \renewcommand\theinnercustomgeneric{##1}%
        \innercustomgeneric
      }
      {\endinnercustomgeneric}
    }
    \newcommand{\definetheorem}[3][]{
      \newtheorem{#2}[#1]{#3}
      \newcustomtheorem{#2restate}{#3}
    }
    \theoremstyle{definition}
\newcommand{\C}{\mathbb{C}}
\newcommand{\N}{\mathbb{N}}
\newcommand{\R}{\mathbb{R}}
\newcommand{\Z}{\mathbb{Z}}
\newcommand{\calH}{\mathcal{H}}
\newcommand{\vv}[1]{\mathbf{#1}}
\DeclarePairedDelimiterX\pBrackets[1]{[}{]}{%
    
    #1
}
\newcommand{\Prob}[2][]{
 \def\tmp{#1}%
   \ifx\tmp{}
     \operatorname{Pr}\pBrackets*{#2}
   \else
     \operatorname{Pr}_{#1}\pBrackets*{#2}
   \fi
}
\newcommand{\Expect}[2][]{
 \def\tmp{#1}%
   \ifx\tmp{}
     \operatorname{\mathbb{E}}\pBrackets*{#2}
   \else
     \operatorname{\mathbb{E}}_{#1}\pBrackets*{#2}
   \fi
}
\renewcommand{\ket}[1]{| #1 \rangle}
\renewcommand{\bra}[1]{\langle #1 |}
\renewcommand{\braket}[2]{\langle #1 | #2 \rangle}
\renewcommand{\proj}[1]{\ket{#1}\bra{#1}}
\renewcommand{\ketbra}[2]{\ket{#1}\bra{#2}}
\newcommand{\vspan}[1]{\operatorname{span}\{ #1 \}}
\newcommand{\diag}{\operatorname{diag}}
\newcommand{\id}{\mathbbm{1}}
\newcommand{\sL}[2]{{\mathfrak{s l}(2, \mathbb{C})}}
\newcommand{\T}{\mathbb{T}}
\begin{document}

\title{On multivariate polynomials achievable with quantum signal processing}
\author{Lorenzo Laneve}
\affiliation{Faculty of Informatics — Universit\`a della Svizzera Italiana, 6900 Lugano, Switzerland}
\email{lorenzo.laneve@usi.ch}
\orcid{0000-0003-2319-5456}

\author{Stefan Wolf}
\affiliation{Faculty of Informatics — Universit\`a della Svizzera Italiana, 6900 Lugano, Switzerland}
\maketitle

\begin{abstract}
  \noindent Quantum signal processing (QSP) is a framework which was proven to unify and simplify a large number of known quantum algorithms, as well as discovering new ones. QSP allows one to transform a signal embedded in a given unitary using polynomials. Characterizing which polynomials can be achieved with QSP protocols is an important part of the power of this technique, and while such a characterization is well-understood in the case of univariate signals, it is unclear which multivariate polynomials can be constructed when the signal is a vector, rather than a scalar. This work uses a slightly different formalism than what is found in the literature, and uses it to find simpler necessary conditions for decomposability, as well as a sufficient condition --- the first, to the best of our knowledge, proven for a (generally inhomogeneous) multivariate polynomial in the context of quantum signal processing.
\end{abstract}

% How to develop quantum algorithms
% Quantum signal processing: main idea and its role
% Literature on QSP: QSVT, GQSP, phase factor evaluation
% M-QSP: role and possible applications + problems encountered
% 
% 

\section{Introduction}
Devising quantum algorithms is a crucial task to understand the potential applications of quantum hardware. For this purpose, a set of composable techniques allows to tackle new problems with relative ease. Notable techniques are amplitude-amplification schemes~\cite{brassardQuantumAmplitudeAmplification2002,groverFastQuantumMechanical1996}, linear combination of unitaries~\cite{childsHamiltonianSimulationUsing2012}, phase-estimation methods~\cite{harrowQuantumAlgorithmLinear2009}, quantum walks~\cite{szegedyQuantumSpeedMarkovChain2004,ambainisQuantumWalkAlgorithm2004,apersUnifiedFrameworkQuantum2021}, and techniques based on query complexity~\cite{cornelissenSpanProgramsQuantum2020,childsQuantumDivideConquer2022,belovsTamingQuantumTime2024}.

Recent research developed a novel framework called \emph{quantum signal processing} (QSP for short)~\cite{lowMethodologyResonantEquiangular2016,lowQuantumSignalProcessing2017} which was shown to unify and simplify a good part of the aforementioned techniques, and beyond~\cite{gilyenQuantumSingularValue2019,martynGrandUnificationQuantum2021}. The idea of this framework is quite simple: we have a \emph{signal} $z$ (typically a complex number of unitary modulus) embedded in a single-qubit \emph{signal operator} (e.g., $z$ is contained in the entries of the matrix). By intertwining calls to this unitary with single-qubit operations (independent of $z$), matrix multiplication gives us a state that is a polynomial in $z$. By a simple tweak (involving phase kickback and Jordan's lemma~\cite{jordanEssaiGeometrieDimensions1875}), one can decompose a high-dimensional unitary $U$ into a direct sum of small subspaces, in which $U$ acts like the signal operator. Within each of these subspaces, we can make $z$ coincide with the eigenvalues of $U$ or the singular values of a matrix \emph{block-encoded} (e.g., the top-left block) in $U$. This construction allows to use QSP to transform the eigenvalues or singular values of matrices given as quantum circuits, using the polynomials implemented with the QSP protocol~\cite{lowOptimalHamiltonianSimulation2017,lowHamiltonianSimulationUniform2017,gilyenQuantumSingularValue2019}.

The power of this technique is that any polynomial satisfying some mild conditions is implementable by a QSP construction --- employing a single control qubit ---, and we can simply specify an algorithm in terms of the polynomials we would like to apply. Notable examples of such conceptual simplification are block-encoded matrix inversion~\cite{lloydHamiltonianSingularValue2021}, phase estimation based on binary search~\cite{martynGrandUnificationQuantum2021}, Hamiltonian simulation~\cite{lowHamiltonianSimulationUniform2017,lowHamiltonianSimulationQubitization2019,martynEfficientFullycoherentQuantum2023}, and state preparation~\cite{mcardleQuantumStatePreparation2022,laneveRobustBlackboxQuantumstate2023}.

Given its success, recent efforts push towards efficient numerical computation of the protocol given a desired polynomial~\cite{haahProductDecompositionPeriodic2019,chaoFindingAnglesQuantum2020, dongEfficientPhasefactorEvaluation2021,dongInfiniteQuantumSignal2024,wangEnergyLandscapeSymmetric2022,mizutaRecursiveQuantumEigenvalue2024}, generally composable paradigms for quantum algorithms~\cite{rossiSemanticEmbeddingQuantum2023,rossiModularQuantumSignal2023}, as well as extensions of the QSP ansatz to different algebras~\cite{haahProductDecompositionPeriodic2019,rossiQuantumSignalProcessing2023,motlaghGeneralizedQuantumSignal2024,laneveQuantumSignalProcessing2024,bastidasComplexificationQuantumSignal2024}.

Among the possible extensions of the model, a version of QSP implementing \emph{multivariate} polynomials was proposed~\cite{rossiMultivariableQuantumSignal2022}. In this case, the signal is a vector $\vv{z}$ where each component is encoded in different signal operators, and the protocol chooses which operator to call at each step. Unfortunately, while nearly any univariate polynomial can be implemented by some QSP protocol, there are some multivariate polynomials --- even with a degree as small as $4$ --- that do not admit such decomposition~\cite{nemethVariantsMultivariateQuantum2023}. A multivariate variant of QSP (M-QSP for short) is quite enticing: applications include computation of functions of commuting matrices~\cite{borns-weilQuantumAlgorithmFunctions2023}, and multivariate Monte Carlo estimation (e.g., following the QSVT-based quantum state preparation scheme of~\cite{mcardleQuantumStatePreparation2022}). 

A lack of characterization of the polynomials decomposable with M-QSP protocols limits the usability of this construction, therefore finding necessary and/or sufficient conditions for applicability is crucial. This work aims at giving further progress on this task.

We use a slightly different formulation of QSP (also used in~\cite{laneveQuantumSignalProcessing2024}) to consider a M-QSP protocol acting on a larger, three-dimensional Hilbert space (which we conjecture to be equivalent to the already-known single-qubit protocol for the class of single-qubit QSP polynomials). This enlarged protocol is a bit easier to work with (since no classical choice is involved in the protocol), and we take advantage of this to derive new, easy-to-check conditions for the (im)possibility to construct a given polynomial. We also prove that some polynomials cannot even be approximated to arbitrary precision by a sequence of constructible polynomials.

The remainder of this work is structured as follows: Section~\ref{sec:univariate-qsp} starts by giving an overview of the variants of univariate QSP found across the literature, to show equivalences and connections with the protocols we are going to analyze. We then proceed with multivariate QSP in Section~\ref{sec:multivariate-qsp}, briefly showing the protocol proposed in~\cite{rossiMultivariableQuantumSignal2022,moriCommentMultivariableQuantum2024}, and extending it to its \emph{analytic} variant. Section~\ref{sec:three-dimensional-qsp} then introduces the aforementioned protocol on three dimensions, proving that it contains the first one, and conjecturing (with motivations) that also the converse holds (Section~\ref{sec:three-dimensional-qsp-equivalence}). We then give a simple necessary and sufficient condition for the existence of a single protocol step that lowers the degree of a given polynomial (needed for induction arguments), extending the conditions stated in~\cite{rossiMultivariableQuantumSignal2022,moriCommentMultivariableQuantum2024} (Section~\ref{sec:three-dimensional-nec-suf-one-step}). In Section~\ref{sec:sufficient-condition-full-decomposability}, we prove that \emph{any} bivariate polynomial $\ket{\gamma(a, b)}$ having non-zero coefficients for $1, a^n, b^n$ can be implemented using our three-dimensional version of M-QSP. To the best of our knowledge, this is the first result that provides a sufficient condition for full M-QSP decomposability. We conclude the work by quantifying the inapproximability of a polynomial (i.e., the best possible error guaranteed by an implementable polynomial to approximate a non-implementable one), also giving a more operational way to conclude the impossibility of a given polynomial (Section~\ref{sec:three-dimensional-inapproximability}).
% T = complex circle
% Laurent polynomial, analytic polynomial
% X, Z pauli matrices
% H Hadamard
% vector notation
% z^k as vectors
% [d] = {0, \ldots, d-1}

\subsection{Preliminaries}
We denote with $\T$ the complex unit circle, i.e., the set of complex numbers $z$ with $|z| = 1$. A \emph{Laurent} polynomial is a function $P(z) = \sum_{k = -n}^n p_n z^n$, i.e., a polynomial whose terms can also have negative exponents. We sometimes use the term \emph{analytic} polynomial (term taken from complex analysis), to denote usual polynomials with only non-negative exponents. Symbols written in bold represent vectors: given two vectors $\vv{z} = (z_1, \ldots, z_m), \vv{k} = (k_1, \ldots, k_m)$, we write $\vv{z}^\vv{k}$, as a shorthand for $z_1^{k_1} \cdots z_m^{k_m}$. We also use the bra-ket notation to denote not only quantum states, but also any non-normalized (possibly zero) vector of amplitudes. We use $X, Y, Z, H$ to denote the three Pauli matrices and the single-qubit Hadamard gate, respectively. The notation $[d]$ represents the set $\{ 0, \ldots, d - 1 \}$.
\section{Univariate quantum signal processing}
\label{sec:univariate-qsp}

\noindent In quantum signal processing, the aim is to transform a signal embedded in a unitary with a polynomial. In the end, this task boils down to constructing a so-called \emph{polynomial state}.
\begin{definition}
  A \emph{polynomial state} $\ket{\gamma(\vv{z})}$ is a polynomial vector in $\vv{z} \in \T^m$ that satisfies $\braket{\gamma(\vv{z})}{\gamma(\vv{z})} \equiv 1$.
\end{definition}
\noindent Generally, we have two ways to express a polynomial state of dimension $d$ (for now we will restrict ourselves to the case $d = 2$, i.e., single-qubit polynomials): the first one is simply a vector whose entries are polynomials,
\begin{align}
    \ket{\gamma(\vv{z})} = \sum_{x = 0}^{d-1} P_x(\vv{z}) \ket{x}\ ,
\end{align}
where the normalization condition can be rewritten as $\sum_{x = 0}^{d-1} |P_x(\vv{z})|^2 \equiv 1$. This representation is mainly used in applications of QSP. The second one decomposes the state as a single (vector) polynomial
\begin{align*}
    \ket{\gamma(\vv{z})} = \sum_{\vv{k}}\ket{\gamma_{\vv{k}}} \vv{z}^{\vv{k}}\ .
\end{align*}
Here $\ket{\gamma_{\vv{k}}}$ are the coefficients of the polynomial, which are $d$-dimensional vectors. This decomposition will be useful for us in this work, as we will see that many properties are more easily expressed in terms of these vectors. In this section, we recap the univariate case, which is fully characterized and well-understood in the literature~\cite{martynGrandUnificationQuantum2021,haahProductDecompositionPeriodic2019,motlaghGeneralizedQuantumSignal2024}. For a signal $z \in \T$, we define the Laurent \emph{signal operator}
\begin{align*}
    \Tilde{v}(z) = 
    \begin{bmatrix}
        z^{-1} & 0 \\
        0 & z
    \end{bmatrix}
\end{align*}
where we usually omit the dependency on $z$ for simplicity. A QSP protocol now is a sequence of calls to the signal operator $\Tilde{v}$ intertwined with a sequence of unitary operators which we call \emph{signal processing operators}, or simply \emph{processing operators}. Different variants of QSP are possible (allowing construction of polynomials requiring different criteria) by simply choosing the basis of $\Tilde{v}$ and the domain of the processing operators.
\begin{theorem}[Laurent QSP in the $W_z$ convention~\cite{haahProductDecompositionPeriodic2019,martynGrandUnificationQuantum2021}]
    \label{thm:univariate-qsp-wz-laurent}
    Let $\ket{\gamma(z)}$ be a two-dimensional Laurent polynomial state of degree~$n$, i.e.,
    \begin{align*}
        \ket{\gamma(z)} = P(z) \ket{0} + Q(z) \ket{1}
    \end{align*}
    where $P, Q \in \C[z, z^{-1}]$ are Laurent polynomials of degree~$n$. There exists a vector of $n+1$ phases $\phi_0, \phi_1, \ldots, \phi_n \in [0, 2\pi)$ such that
    \begin{align*}
        e^{i\phi_n X} \ \Tilde{v} \ e^{i\phi_{n-1} X} \Tilde{v} \cdots \Tilde{v} \ e^{i\phi_0 X} \ket{0} = \ket{\gamma(z)}
    \end{align*}
    if and only if:
    \begin{enumerate}[label=(\roman*)]
        \item $P(z)$ has real coefficients and $Q(z)$ has imaginary coefficients;
        \item $P, Q$ have parity $n \bmod 2$.
    \end{enumerate}
\end{theorem}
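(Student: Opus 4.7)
The plan is to prove both implications by induction on $n$. The necessity of (i), (ii) follows from the observation that the set of QSP-reachable polynomial states is closed under the recipe: starting from $e^{i\phi_0 X}\ket{0} = \cos\phi_0\ket{0} + i\sin\phi_0\ket{1}$ (which satisfies (i), (ii) with $n = 0$), each additional step $e^{i\phi X}\,\Tilde{v}$ preserves the real/imaginary-coefficient pattern and bumps the Laurent degree by one while flipping parity. Concretely, $\Tilde{v}$ shifts the $\ket{0}$-component by $z^{-1}$ and the $\ket{1}$-component by $z$, and the processing unitary $e^{i\phi X} = \cos\phi\,\id + i\sin\phi\,X$ mixes the two components with the scalars $\cos\phi$ (real) and $i\sin\phi$ (imaginary); since a real polynomial times $\cos\phi$ stays real and an imaginary polynomial times $i\sin\phi$ becomes real (and vice versa), the invariant on the components' coefficients is preserved.

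The converse (sufficiency) reduces, by induction, to a single \emph{stripping lemma}: given $\ket{\gamma(z)}$ satisfying (i), (ii) with degree $n \geq 1$, there is a phase $\phi_n$ such that $\ket{\gamma'(z)} := \Tilde{v}^{-1} e^{-i\phi_n X}\ket{\gamma(z)}$ is a polynomial state of degree at most $n-1$ still satisfying (i), (ii). A direct computation produces the upper component $z(\cos\phi_n P - i\sin\phi_n Q)$ and the lower component $z^{-1}(\cos\phi_n Q - i\sin\phi_n P)$; demanding that the $z^{n+1}$ coefficient of the former and the $z^{-n-1}$ coefficient of the latter vanish gives two linear equations in $(\cos\phi_n, \sin\phi_n)$, involving respectively $p_n, q_n$ and $p_{-n}, q_{-n}$, where I write $P = \sum_k p_k z^k$ and $Q = \sum_k i q_k z^k$ with $p_k, q_k \in \R$.

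The main obstacle is showing that these two equations in one unknown are simultaneously solvable. After eliminating $\tan\phi_n$, the compatibility condition collapses to the identity $p_n p_{-n} + q_n q_{-n} = 0$, which I would extract as the $[z^{2n}]$ coefficient of the Laurent polynomial identity $P(z)P(z^{-1}) - Q(z)Q(z^{-1}) \equiv 1$ --- itself the rewriting on $\T$ of the normalization $|P(z)|^2 + |Q(z)|^2 \equiv 1$, using that $\overline{P(z)} = P(z^{-1})$ (since $P$ has real coefficients) and $\overline{Q(z)} = -Q(z^{-1})$ (since $Q$ has imaginary coefficients). The edge cases in which $q_n$ or $p_{-n}$ vanishes are handled by the same identity, which forces a compatible vanishing of the other element in the pair. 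Once $\phi_n$ is picked, verifying that $\ket{\gamma'(z)}$ still satisfies (i) and (ii) with the new parity $(n-1) \bmod 2$ is immediate from the explicit formulas above. Iterating the stripping down to $n = 0$ and then reversing the order of the phases yields the sought decomposition; the base case admits a valid $\phi_0$ since the residual state has $P_0 \in \R$ and $Q_0 \in i\R$ with $|P_0|^2 + |Q_0|^2 = 1$.
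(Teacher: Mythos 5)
Your proof is correct, and it is the standard peel-off argument from \cite{haahProductDecompositionPeriodic2019,martynGrandUnificationQuantum2021}; the paper itself does not reprove this theorem but states it with citations, so there is no internal proof to compare against. The key identity $p_n p_{-n} + q_n q_{-n} = 0$, extracted from the $[z^{2n}]$ coefficient of $P(z)P(z^{-1}) - Q(z)Q(z^{-1}) \equiv 1$, is exactly what makes the two linear conditions on $(\cos\phi_n,\sin\phi_n)$ compatible, and the degenerate cases (one or both of the pairs $(p_n,q_n)$, $(q_{-n},-p_{-n})$ vanishing) are handled correctly since the corresponding constraint becomes vacuous. One cosmetic slip: since each stripping step factors $\ket{\gamma(z)} = e^{i\phi_n X}\,\Tilde{v}\,\ket{\gamma'(z)}$ and so on, the phases already emerge in the right order $\phi_n, \phi_{n-1}, \ldots, \phi_0$ from outermost to innermost, so no reversal is needed at the end.
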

\noindent This was called the $W_z$ in~\cite{martynGrandUnificationQuantum2021}, since the signal operator (which they called $W(\theta)$) can be seen as a $Z$ rotation. The following variant is called $W_x$ for obvious reasons.
\begin{theorem}[Laurent QSP in the $W_x$ convention~\cite{martynGrandUnificationQuantum2021}]
    \label{thm:univariate-qsp-wx-laurent}
    Let $\ket{\gamma(z)}$ be a two-dimensional Laurent polynomial state of degree~$n$, i.e.,
    \begin{align*}
        \ket{\gamma(z)} = P(z) \ket{0} + Q(z) \ket{1}
    \end{align*}
    where $P, Q \in \C[z, z^{-1}]$ are Laurent polynomials of degree~$n$. There exists a vector of $n+1$ phases $\phi_0, \phi_1, \ldots, \phi_n \in [0, 2\pi)$ such that
    \begin{align*}
        e^{i\phi_n Z} \ H \Tilde{v} H \ e^{i\phi_{n-1} Z} H \Tilde{v} H \cdots H \Tilde{v} H \ e^{i\phi_0 Z} \ket{0} = \ket{\gamma(z)}
    \end{align*}
    if and only if:
    \begin{enumerate}[label=(\roman*)]
        \item $P(z) = P(z^{-1})$ and $Q(z) = -Q(z^{-1})$ for every $z \in \T$;
        \item $P, Q$ have parity $n \bmod 2$.
    \end{enumerate}
\end{theorem}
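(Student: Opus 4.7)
The plan is an induction on $n$, proving necessity and sufficiency separately. An alternative route would exploit the identity $W_x = H W_z H$ (with both sides sharing the same phase sequence) to reduce the claim to Theorem~\ref{thm:univariate-qsp-wz-laurent}; however, because this reduction converts the starting state $\ket{0}$ into $\ket{+}$, one would still have to analyse how the symmetry conditions translate between the two conventions, so I expect the direct inductive argument to be cleaner.

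For necessity, the initial state $\ket{0}$ satisfies both (i) and (ii) trivially. I then verify that both atomic moves preserve them. The processing step $e^{i\phi Z}$ sends $(P, Q) \mapsto (e^{i\phi} P, e^{-i\phi} Q)$, which preserves both the symmetry/antisymmetry in (i) and the parity in (ii). A short matrix computation gives
\[
  H \Tilde{v} H : (P, Q) \mapsto \left( \frac{(z + z^{-1}) P + (z^{-1} - z) Q}{2},\ \frac{(z^{-1} - z) P + (z + z^{-1}) Q}{2} \right),
\]
where $(z + z^{-1})$ and $(z^{-1} - z)$ are symmetric and antisymmetric under $z \to z^{-1}$ respectively and of odd degree, so (i) is preserved and parity flips (consistently with a degree increment of~$1$).

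For sufficiency, the inductive step is to peel off one pair: find $\phi_n$ such that $(H \Tilde{v} H)^{-1} e^{-i\phi_n Z} \ket{\gamma(z)}$ has Laurent degree at most $n - 1$ and still satisfies (i), (ii). Writing the leading Laurent coefficients of $P, Q$ as $p_n, q_n$, a direct calculation identifies the coefficient of $z^{n+1}$ in the $\ket{0}$-component of the peeled state as $\tfrac{1}{2}(e^{-i\phi_n} p_n + e^{i\phi_n} q_n)$. The relations $p_{-n} = p_n$ and $q_{-n} = -q_n$ from (i) force the other three extremal coefficients (of $z^{\pm(n+1)}$ in both components) to vanish under the same single equation $e^{2i\phi_n} q_n = -p_n$. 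This admits a solution in $[0, 2\pi)$ iff $|p_n| = |q_n|$, and the preservation of (i), (ii) under the peeling step is the same check as in the necessity argument applied to $(H \Tilde{v} H)^{-1}$. Together with the base case $n = 0$ (where $Q = 0$ and $P$ is any unit-modulus constant), this closes the induction.

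I expect the main obstacle to be the identity $|p_n| = |q_n|$, which should be extracted from normalisation: $|P(z)|^2 + |Q(z)|^2 \equiv 1$ on $z \in \T$ together with the substitution $\bar z = z^{-1}$ turn the coefficient of $z^{2n}$ in $|P|^2 + |Q|^2$ into $p_n \overline{p_{-n}} + q_n \overline{q_{-n}}$, which must vanish; invoking (i) collapses this to $|p_n|^2 = |q_n|^2$. The degenerate case $p_n = q_n = 0$ is handled by parity, which already forces Laurent degree at most $n - 2$, so that any $\phi_n$ suffices for the peeling step.
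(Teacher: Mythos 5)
Your proposal is correct, and it is essentially the standard peeling induction that the paper itself relies on: the paper does not spell out a proof of this cited result, but its later remark (that the univariate theorems follow by induction because the single-step condition reduces to endpoint orthogonality guaranteed by $\braket{\gamma(z)}{\gamma(z)}\equiv 1$) is exactly your argument, with your identity $|p_n| = |q_n|$ being that orthogonality/normalization condition written in the Laurent $W_x$ convention. Your degenerate case ($p_n = q_n = 0$) and the parity argument ensuring the $z^{\pm n}$ coefficients also vanish after peeling are handled consistently, so the induction closes as claimed.
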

\noindent A simple conjugation of a $W_x$ protocol with a Hadamard gate gives a $W_z$ protocol (and vice versa). Note that the conjugated signal operator
\begin{align*}
    H \Tilde{v} H =
    \frac{1}{2}
    \begin{bmatrix}
        z^{-1} + z & z^{-1} - z \\
        z^{-1} - z & z^{-1} + z
    \end{bmatrix}
    =
    \begin{bmatrix}
        x & -i\sqrt{1 - x^2} \\
        -i\sqrt{1 - x^2} & x
    \end{bmatrix}
\end{align*}
can be rewritten by replacing $x = (z + z^{-1})/2$, giving us the traditional version of quantum signal processing~\cite{lowMethodologyResonantEquiangular2016,lowQuantumSignalProcessing2017}, from which the quantum singular value transformation can be obtained~\cite{gilyenQuantumSingularValue2019,tangCSGuideQuantum2023}. If, instead of $X$ or $Z$ rotations, we allow the signal-processing operators to be arbitrary unitaries in $SU(2)$, then condition (i) in both the previous theorems will lift (they are actually two subalgebras of a bigger algebra~\cite{chaoFindingAnglesQuantum2020,motlaghGeneralizedQuantumSignal2024}). We remark that using arbitrary $SU(2)$ operators does not increase the complexity of the circuit significantly, as any $SU(2)$ element can be rewritten using Euler's decomposition:
\begin{align*}
    U = e^{i\alpha Z} e^{i\beta X} e^{i\gamma Z} \ \ \  \alpha, \beta, \gamma \in [0, 2\pi).
\end{align*}
Indeed, this decomposition, along with the fact that $Z$ rotations commute with $\Tilde{v}$, yields the protocol with double rotations of~\cite{motlaghGeneralizedQuantumSignal2024} (this also suggests that we do not even need the full $SU(2)$ to remove condition (i)).

We now consider a different, but more intuitive version, recently introduced in~\cite{motlaghGeneralizedQuantumSignal2024}, which considers the following \emph{analytic} signal operator
\begin{align*}
    \Tilde{w}(z) =
    \begin{bmatrix}
        1 & 0 \\
        0 & z
    \end{bmatrix}\ .
\end{align*}
By intertwining this operator with processing operators, we obtain a similar result.
\begin{theorem}[Analytic QSP in the $W_z$ convention~\cite{motlaghGeneralizedQuantumSignal2024}]
    \label{thm:univariate-qsp-wz-analytic}
    Let $\ket{\gamma(z)}$ be a two-dimensional polynomial state of degree~$n$, i.e.,
    \begin{align*}
        \ket{\gamma(z)} = P(z) \ket{0} + Q(z) \ket{1}
    \end{align*}
    where $P, Q \in \C[z, z^{-1}]$ are polynomials of degree~$n$. There exists a vector of $n+1$ phases $\phi_0, \phi_1, \ldots, \phi_n \in [0, 2\pi)$ such that
    \begin{align*}
        e^{i\phi_n X} \ \Tilde{w} \ e^{i\phi_{n-1} X} \Tilde{w} \cdots \Tilde{w} \ e^{i\phi_0 X} \ket{0} = \ket{\gamma(z)}
    \end{align*}
    if and only if:
    \begin{enumerate}[label=(\roman*)]
        \item $P(z)$ has real coefficients and $Q(z)$ has imaginary coefficients for every $z \in \T$.
    \end{enumerate}
\end{theorem}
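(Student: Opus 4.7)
The plan is to mirror the induction template used for Laurent QSP (Theorems~\ref{thm:univariate-qsp-wz-laurent} and~\ref{thm:univariate-qsp-wx-laurent}), adapted to the asymmetric analytic signal operator $\Tilde{w} = \mathrm{diag}(1, z)$. The parity condition present in the Laurent variants disappears here because $\Tilde{w}$ raises the degree by at most one rather than shifting it by $\pm 1$; consequently a protocol with $n$ calls to $\Tilde{w}$ can produce an arbitrary polynomial of degree at most $n$ in each component.

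For necessity, I would induct on $n$, the number of calls to $\Tilde{w}$, maintaining the invariant that the resulting state is $P\ket{0} + Q\ket{1}$ with $P$ real-coefficient, $Q$ imaginary-coefficient, both of degree at most $n$. The base case $n = 0$ is $e^{i\phi_0 X}\ket{0} = \cos\phi_0\ket{0} + i\sin\phi_0\ket{1}$. In the inductive step, $\Tilde{w}$ sends $(P, Q) \mapsto (P, zQ)$, and $e^{i\phi X} = \cos\phi\, I + i\sin\phi\, X$ sends $(P, Q) \mapsto (\cos\phi\, P + i\sin\phi\, Q,\, \cos\phi\, Q + i\sin\phi\, P)$; both preserve the invariant because $i\cdot(\text{imaginary})$ is real and $i\cdot(\text{real})$ is imaginary.

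For sufficiency, I would induct downward on $n$. The base case $n = 0$ amounts to writing any unit vector of the form $p\ket{0} + iq\ket{1}$ with $p, q \in \R$ as $e^{i\phi_0 X}\ket{0}$, which is straightforward. For the inductive step, given a polynomial state of degree exactly $n$ satisfying~(i), I would seek $\phi_n$ such that
\begin{align*}
e^{-i\phi_n X}\ket{\gamma(z)} = \Tilde{w}(z)\ket{\gamma'(z)}
\end{align*}
for some polynomial state $\ket{\gamma'(z)}$ of degree $n - 1$ still satisfying~(i), so that the inductive hypothesis applies to $\ket{\gamma'}$. Writing $P(z) = \sum_{k = 0}^n p_k z^k$ and $Q(z) = \sum_{k = 0}^n q_k z^k$, the two requirements ``the $\ket{0}$-component has degree $\le n - 1$'' and ``the $\ket{1}$-component is divisible by $z$'' translate to $\cos\phi_n\, p_n = i\sin\phi_n\, q_n$ and $\cos\phi_n\, q_0 = i\sin\phi_n\, p_0$, each fixing $\tan\phi_n$.

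The main obstacle is the compatibility of these two equations: a single angle $\phi_n$ must simultaneously cancel the leading coefficient of $P'$ and the constant coefficient of $zQ'$. The reason it works is the normalization constraint: extending $P, Q$ as Laurent polynomials on $\T$ (so that $\overline{P(z)} = P(z^{-1})$ and $\overline{Q(z)} = -Q(z^{-1})$ by condition~(i)), the identity $|P|^2 + |Q|^2 \equiv 1$ becomes $P(z)P(z^{-1}) - Q(z)Q(z^{-1}) = 1$; extracting the coefficient of $z^n$ gives $p_n p_0 = q_n q_0$, which is exactly the consistency condition. Degenerate subcases ($p_n$ or $q_n$ vanishing) are handled by choosing $\phi_n \in \{0, \pi/2\}$, the same identity again forcing consistency. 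That $\ket{\gamma'}$ still satisfies~(i) follows from the same arithmetic used in the necessity direction applied to the single-layer map.
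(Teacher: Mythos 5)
Your proof is correct. The paper does not give its own proof of Theorem~\ref{thm:univariate-qsp-wz-analytic}: it cites the literature, and the only internal machinery it provides is Lemma~\ref{thm:laurent-analytic-correspondence}, which transfers the Laurent characterization of Theorem~\ref{thm:univariate-qsp-wz-laurent} to the analytic setting via the substitution $\Tilde{v}(z) = z^{-1}\Tilde{w}(z^{2})$. That is a reduction, whereas you run the degree-lowering induction directly on the analytic protocol. Both rest on the same template (peel off one layer $\Tilde{w}^{\dagger}e^{-i\phi X}$ and induct), but the direct route is self-contained and avoids the parity bookkeeping that the Laurent side carries. Your identification of the compatibility identity is exactly the crux: rewriting $|P|^{2}+|Q|^{2}\equiv 1$ on $\T$ as $P(z)P(z^{-1}) - Q(z)Q(z^{-1}) \equiv 1$ and extracting the coefficient of $z^{n}$ yields $p_{n}p_{0}=q_{n}q_{0}$, which makes the two equations $\cos\phi_{n}\,p_{n}=i\sin\phi_{n}\,q_{n}$ and $\cos\phi_{n}\,q_{0}=i\sin\phi_{n}\,p_{0}$ simultaneously solvable (and the resulting $\tan\phi_{n}$ is real precisely because $p_{n},p_{0}\in\R$ while $q_{n},q_{0}\in i\R$). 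Two small things worth stating explicitly: first, $\ket{\gamma'(z)}$ stays normalized because $\Tilde{w}^{\dagger}e^{-i\phi_{n}X}$ is unitary for $z\in\T$, so the inductive hypothesis genuinely applies; second, in the degenerate case where the reduced state reaches degree $m<n$ with applications of $\Tilde{w}$ still remaining, condition~(iii') of the same identity (coefficient of $z^{k}$ for the relevant $k$) continues to guarantee a valid $\phi$ at each remaining layer, so no padding argument is needed.
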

\noindent Notice how condition (ii) does not appear in analytic QSP. Indeed the constraint of definite parity is only due to the fact that in $\Tilde{v}$ there is a distance of two degrees between $z, z^{-1}$. Similarly as with Laurent QSP, if we extend signal operators to any $SU(2)$ operator, then we would lift condition (i), implying that any analytic polynomial state can be obtained. The following result shows that there is no substantial difference between the Laurent and analytic variants.
\begin{lemma}[Analytic-Laurent correspondence]
    \label{thm:laurent-analytic-correspondence}
    Consider a definite-parity Laurent polynomial state~$\ket{\gamma(z)}$
    \begin{align*}
        \ket{\gamma(z)} = \ket{\gamma_{-n}} z^{-n} + \ket{\gamma_{-n+2}} z^{-n+2} + \cdots + \ket{\gamma_{n-2}} z^{n-2} + \ket{\gamma_n} z^n
    \end{align*}
    where $\ket{\gamma_k}$ are the coefficients of the polynomial. Then $\ket{\gamma(z)}$ is implemented by the Laurent QSP protocol
    \begin{align*}
        A_n \Tilde{v} A_{n-1} \Tilde{v} \cdots \Tilde{v} A_0 \ket{0} = \ket{\gamma(z)}
    \end{align*}
    for some sequence $\{ A_k \}_k \in SU(2)$ of processing operators if and only if the analytic polynomial
    \begin{align*}
        \ket{\gamma^a(z)} = \ket{\gamma_{-n}} + \ket{\gamma_{-n+2}} z + \cdots + \ket{\gamma_{n-2}} z^{n-1} + \ket{\gamma_n} z^n
    \end{align*}
    is implemented by the analytic QSP protocol
    \begin{align*}
        A_n \Tilde{w} A_{n-1} \Tilde{w} \cdots \Tilde{w} A_0 \ket{0} = \ket{\gamma^a(z)}.
    \end{align*}
\end{lemma}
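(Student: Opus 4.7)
The plan is to exhibit the correspondence explicitly by using the same sequence of processing operators $A_0, \ldots, A_n$ in both protocols, and relate the two signal operators via a simple algebraic identity. The central observation is that
\begin{align*}
    \Tilde{v}(z) = \begin{bmatrix} z^{-1} & 0 \\ 0 & z \end{bmatrix} = z^{-1} \begin{bmatrix} 1 & 0 \\ 0 & z^2 \end{bmatrix} = z^{-1}\, \Tilde{w}(z^2).
\end{align*}
Because $z^{-1}$ is a scalar it commutes past all the processing operators, so substituting this identity into the Laurent protocol produces the analytic protocol (evaluated at $z^2$) up to a global factor of $z^{-n}$.

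For the forward direction I would start from
\begin{align*}
    A_n \Tilde{v}(z)\, A_{n-1} \Tilde{v}(z) \cdots \Tilde{v}(z)\, A_0 \ket{0} = \ket{\gamma(z)},
\end{align*}
factor out the $n$ scalars $z^{-1}$ produced by each signal call, and rearrange to obtain
\begin{align*}
    A_n \Tilde{w}(z^2)\, A_{n-1} \Tilde{w}(z^2) \cdots \Tilde{w}(z^2)\, A_0 \ket{0} = z^n \ket{\gamma(z)}.
\end{align*}
The definite-parity assumption on $\ket{\gamma(z)}$ ensures that $z^n \ket{\gamma(z)} = \sum_{j=0}^n \ket{\gamma_{-n+2j}} z^{2j}$ contains only even powers of $z$, so it can be written as $\ket{\gamma^a(z^2)}$ by construction of the analytic polynomial. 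Since the squaring map $z \mapsto z^2$ is surjective on $\T$, the equality of polynomials
\begin{align*}
    A_n \Tilde{w}(w)\, A_{n-1} \Tilde{w}(w) \cdots \Tilde{w}(w)\, A_0 \ket{0} = \ket{\gamma^a(w)}
\end{align*}
holds for all $w \in \T$ and therefore as a polynomial identity. The converse is the same computation read in reverse: starting from the analytic identity, substitute $w = z^2$, pull a factor of $z^{-n}$ back into the $n$ signal operators using $\Tilde{w}(z^2) = z\,\Tilde{v}(z)$, and recover the Laurent protocol producing $z^{-n}\ket{\gamma^a(z^2)} = \ket{\gamma(z)}$.

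The entire argument is essentially a change of variables, so there is no serious obstacle — the only point requiring any care is checking that the parity condition on $\ket{\gamma(z)}$ is exactly what guarantees that $z^n \ket{\gamma(z)}$ is a genuine polynomial in $z^2$ (so that the analytic object $\ket{\gamma^a(w)}$ is well defined), and conversely that the Laurent object obtained from an analytic $\ket{\gamma^a}$ via $z^{-n}\ket{\gamma^a(z^2)}$ has the claimed definite parity. Both are immediate from the construction, so the lemma reduces to the single identity $\Tilde{v}(z) = z^{-1}\Tilde{w}(z^2)$ plus bookkeeping of the global $z^{-n}$ factor.
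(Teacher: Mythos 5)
Your proof is correct and follows essentially the same route as the paper's: both rest on the identity $\Tilde{v}(z) = z^{-1}\Tilde{w}(z^2)$, factor out the global $z^{-n}$, and then replace $z^2$ by a fresh variable. You spell out two small points the paper leaves implicit — that the definite-parity assumption is what makes $z^n\ket{\gamma(z)}$ a polynomial in $z^2$, and that agreement on the infinite set $\T$ upgrades the evaluation to a polynomial identity — but the argument is the same.
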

\begin{proof}
    By the fact that $\Tilde{v}(z) = z^{-1} \Tilde{w}(z^2)$, replacing the former with the latter in the protocol gives the Laurent polynomial $z^{-n} \ket{\gamma^a(z^2)}$. If we multiply by $z^n$ and replace $\Tilde{w}(z^2)$ with $\Tilde{w}(z)$, we obtain the desired polynomial. The converse can be obtained by following the argument in reverse.
\end{proof}
\begin{table}
    \centering
    \begin{tabular}{|p{2.5cm} | p{5.5cm} | p{5.5cm}|} 
     \hline
     \rule{0pt}{13pt}
     & \multicolumn{1}{|c|}{\textbf{Laurent picture}~\cite{haahProductDecompositionPeriodic2019}} & \multicolumn{1}{|c|}{\textbf{Analytic picture}~\cite{motlaghGeneralizedQuantumSignal2024}} \\ [0.5ex] 
     & \multicolumn{1}{|c|}{$\Tilde{v} = \diag(z^{-1}, z)$} & \multicolumn{1}{|c|}{$\Tilde{w} = \diag(1, z)$} \\ [1ex]
     \hline\hline
     {\vspace*{0.2cm} \begin{center}$W_x$ convention\end{center}} &
        \begin{enumerate}[label=(\roman*)]
            \item $P(z) = P(z^{-1})$
            \item[] $Q(z) = -Q(z^{-1})$
            \item $P, Q$ have parity $n \bmod 2$
        \end{enumerate} &
        \begin{enumerate}[label=(\roman*)]
            \item $P(z) = z^n P(z^{-1})$
            \item[] $Q(z) = -z^n Q(z^{-1})$
        \end{enumerate} \\
     \hline
     {\vspace*{0.2cm} \begin{center}$W_z$ convention\end{center}} &
        \begin{enumerate}[label=(\roman*)]
            \item $P(z) \in \R[z, z^{-1}]$
            \item[] $Q(z) \in i\R[z, z^{-1}]$
            \item $P, Q$ have parity $n \bmod 2$
        \end{enumerate} &
        \begin{enumerate}[label=(\roman*)]
            \item $P(z) \in \R[z]$
            \item[] $Q(z) \in i\R[z]$
        \end{enumerate} \\ 
     \hline
     {{\vspace*{-0.1cm} \begin{center}Full algebra\end{center}}} &
        \begin{enumerate}[label=(\roman*)]
            \item $P, Q$ have parity $n \bmod 2$
        \end{enumerate} &
        \begin{enumerate}
            \item[] No additional constraints.
        \end{enumerate} \\
     \hline
    \end{tabular}
    \caption{Overview of the variante of univariate QSP found across the literature, with the full characterizations of the polynomial states $\ket{\gamma(z)} = P(z) \ket{0} + Q(z) \ket{1}$.}
    \label{tbl:univariate-qsp-variants}
\end{table}

\noindent Lemma~\ref{thm:laurent-analytic-correspondence} gives a clear one-to-one correspondence between the classes of analytic and Laurent polynomials achievable with QSP, i.e., understanding polynomials of either variant immediately implies a clear characterization on the other. This result is important as it allows us to work with the analytic variant, which is a bit easier to visualize in the subsequent sections: a useful advantage is that the degree of an analytic polynomial state being constructed with a QSP protocol only grows on one side. We remark that conditions (i) in the claims above will be preserved/converted to the counterpart in the other variant (see Table~\ref{tbl:univariate-qsp-variants} for a overview of the univariate versions and their conditions).

\section{Multivariate quantum signal processing}
\label{sec:multivariate-qsp}

\noindent We now talk about the multivariate version of quantum signal processing. Here, the polynomial states $\ket{\gamma(\vv{z})}$ we want to construct transform a signal $\vv{z} \in \T^m$.

\begin{protocol}[Laurent multivariate QSP with classical choice~\cite{rossiMultivariableQuantumSignal2022,moriCommentMultivariableQuantum2024}]
    \label{def:rossi-chuang-qsp}
    Given a vector of phases $\vv{z} \in \T^m$, let $\Tilde{v}_k = \Tilde{v}(z_k) = \diag(z_k^{-1}, z_k)$. Fixing a vector of choices $\vv{s} \in [m]^n$, the following protocol constructs a Laurent multivariate polynomial state
    \begin{align*}
        A_n \Tilde{v}_{s_n} A_{n-1} \Tilde{v}_{s_{n-1}} \cdots \Tilde{v}_{s_1} A_0 \ket{0} = \ket{\gamma(\vv{z})} = P(\vv{z}) \ket{0} + Q(\vv{z}) \ket{1}
    \end{align*}
\end{protocol}
\noindent In this protocol, we follow exactly the same idea as in the univariate case, where at each step we choose a variable~$z_k$ and we apply the signal operator~$\Tilde{v}(z_k)$. Let $n_k$ be the number of times $k$ appears in $\vv{s}$. Necessary conditions for a polynomial state $\ket{\gamma(\vv{z})}$ to be implemented with Protocol~\ref{def:rossi-chuang-qsp} are~\cite{moriCommentMultivariableQuantum2024}:
\begin{enumerate}[label=(\roman*)]
    \item for every $k \in [m]$, the degree of $\ket{\gamma(\vv{z})}$ with respect to $z_k$, i.e., the maximum power of $z_k$ with a non-zero coefficient vector, is at most $n_k$;
    \item for every $k \in [m]$, $\ket{\gamma(\vv{z})}$ has parity $n_k \bmod 2$.
\end{enumerate}
\noindent These conditions need to be satisfied even if the processing operators $A_k$ are arbitrary $SU(2)$ operations. Furthermore, as in the univariate case, we have additional constraints if we restrict to a subset of signal processing operators:
\begin{itemize}
    \item[(iii$^z$)] in the $W_z$ convention, $P$ must have real coefficients, and $Q$ must have imaginary coefficients;
    \item[(iii$^x$)] in the $W_x$ convention, $P(\vv{z}) = P(\vv{z}^{-\vv{1}})$ and $Q(\vv{z}) = -Q(\vv{z}^{-\vv{1}})$ for every choice of $\vv{z} \in \T^m$.
\end{itemize}

\noindent Moreover, it is possible to replace $\Tilde{v}$ in Protocol~\ref{def:rossi-chuang-qsp} with $\Tilde{w}$, thus obtaining an analytic version of this multivariate protocol.
\begin{protocol}[Analytic multivariate QSP with classical choice]
    \label{def:rossi-chuang-qsp-analytic}
    Given a vector of phases $\vv{z} \in \T^m$, let $\Tilde{w}_k = \Tilde{w}(z_k) = \diag(1, z_k)$. With a vector of choices $\vv{s} \in [m]^n$, the following protocol constructs an analytic multivariate polynomial state
    \begin{align*}
        A_n \Tilde{w}_{s_n} A_{n-1} \Tilde{w}_{s_{n-1}} \cdots \Tilde{w}_{s_1} A_0 \ket{0} = \ket{\gamma(\vv{z})} = P(\vv{z}) \ket{0} + Q(\vv{z}) \ket{1}
    \end{align*}
\end{protocol}
\noindent In the spirit of Lemma~\ref{thm:laurent-analytic-correspondence}, whose argument is easily extended to the multivariate case, we will focus on characterizing this version.
\section{A larger protocol}
\label{sec:three-dimensional-qsp}

\noindent Unfortunately, the necessary conditions listed in the previous sections are not sufficient, as counterexamples in~\cite{nemethVariantsMultivariateQuantum2023} show. Understanding which conditions are sufficient for implementability is certainly not an easy task. A~condition highlighted in~\cite{moriCommentMultivariableQuantum2024} for the $W_z$ convention, is the following:
\begin{enumerate}
    \item[(iv$^z$)] Let $\vv{z}_{-k}$ be obtained from $\vv{z}$ by removing $z_k$, and let $P_m(\vv{z}_{-k})$ be the coefficient of $P(\vv{z})$ of $z_k^m$ (analogously for $Q_m(\vv{z}_{-k})$). There exists a $k$ such that
    \begin{align*}
        P_{n_k}(\vv{z}_{-k}) \equiv e^{2\pi i\varphi} Q_{n_k}(\vv{z}_{-k})\ .
    \end{align*}
\end{enumerate}
If such condition holds, then one can choose $\vv{s}_n = k$ and the angle for the $Z$ rotation in the processing operator to be~$\varphi$. Undoing this last step on $\ket{\gamma(\vv{z})}$ will give a valid polynomial state with lower degree $\ket{\gamma'(\vv{z})}$, and the rest of the protocol could be extracted by induction. The problem, however, is that condition (iv) alone on $\ket{\gamma(\vv{z})}$ guarantees conditions (i)-(iii) to hold also for $\ket{\gamma'(\vv{z})}$, but it does not preserve condition (iv).

A huge problem of condition (iv) (besides the fact that it is only valid in the $W_z$ convention) is that it is rather complicated to work with, as it requires the existence of \emph{some} index $k$. Moreover, assuming that we have a polynomial that admits a protocol and we find at some point that more than one choice for $k$ satisfies condition (iv), it is unclear whether any choice taken at this step will certainly lead to a full decomposition. In other words, how can we be sure that there are no ``bad choices''?

Here we circumvent these structural problems by considering a slightly extended protocol (throughout this work we consider only two variables for simplicity, but we can in principle extend it to an arbitrary number).
\begin{protocol}
    \label{def:three-dimensional-qsp}
    Consider the following signal operator
    \begin{align*}
        \Tilde{W} = \diag(1, a, b)
    \end{align*}
    We intertwine this operator with a sequence of processing operators $A_k \in SU(3)$ such that
    \begin{align*}
        A_n \Tilde{W} A_{n-1} \Tilde{W} \cdots \Tilde{W} A_0 \ket{0} = \ket{\gamma(a, b)}.
    \end{align*}
\end{protocol}
\noindent Clearly, such protocol produces polynomial states of dimension $3$ in general:
\begin{align}
    \ket{\gamma(a, b)} = P(a, b) \ket{0} + Q(a, b) \ket{1} + R(a, b) \ket{2} \label{eq:three-dimensional-poly-state}
\end{align}

\subsection{Equivalence with the two-dimensional protocol}
\label{sec:three-dimensional-qsp-equivalence}
\noindent We want to restrict our attention to the polynomials of the form (\ref{eq:three-dimensional-poly-state}) satisfying $R(a, b) \equiv 0$. We give a formal and more comprehensive definition of this class of polynomials, which will be useful later.
\begin{definition}[Effective dimension]
    A polynomial state $\ket{\gamma(\vv{z})} = \sum_{\vv{k}} \ket{\gamma_{\vv{k}}} \vv{z}^{\vv{k}}$ has \emph{effective dimension} $d$ if its coefficient vectors $\ket{\gamma_{\vv{k}}}$ span a subspace of dimension $d$. The same subspace is spanned by $\{ \ket{\gamma(\vv{z})} \}_{\vv{z}}$.
\end{definition}
\noindent States with $R(a, b) \equiv 0$ have effective dimension $\le 2$. In fact, they are the only ones, up to a unitary transformation. Protocol~\ref{def:rossi-chuang-qsp-analytic} is a strict subset of Protocol~\ref{def:three-dimensional-qsp}: we simply ignore the third dimension, choosing only processing operators of the form
\begin{align}
    A_k =
    \begin{bmatrix}
        A_k' & 0 \\
        0 & 1
    \end{bmatrix}
    \label{eq:constrained-processing-operators-3d}
\end{align}
where $A_k' \in SU(2)$ is the processing operator used in the two-dimensional protocol. In this way, $\Tilde{W}$ will act as $\diag(1, a)$ in $\vspan{ \ket{0}, \ket{1} }$. Whenever we choose the signal operator for $b$ at some step, we simply use a permutation operation $S = \proj{0} + \ketbra{1}{2} + \ketbra{2}{1}$ so that $S \Tilde{W} S = \diag(1, b, a)$. Constraining the processing operators in this way certainly gives $R(a, b) \equiv 0$. We conjecture that also the converse is true.
\begin{conjecture}
    \label{thm:conjecture-bidirectional-3d-2d}
    Let $\ket{\gamma(a, b)} = A_n \Tilde{W} A_{n-1} \Tilde{W} \cdots \Tilde{W} A_0 \ket{0}$ be a construction following Protocol~\ref{def:three-dimensional-qsp}.
    If $R(a, b) = \braket{2}{\gamma(a, b)} \equiv 0$, there exist $A_k' \in SU(2)$ and permutation matrices $S_k$ such that
    \begin{align*}
        \begin{bmatrix}
            A_n' & 0 \\
            0 & 1
        \end{bmatrix}
        S_n \Tilde{W} S_n
        \begin{bmatrix}
            A_{n-1}' & 0 \\
            0 & 1
        \end{bmatrix}
        S_{n-1} \Tilde{W} S_{n-1}
        \cdots
        S_1 \Tilde{W} S_1
        \begin{bmatrix}
            A_0' & 0 \\
            0 & 1
        \end{bmatrix}
        \ket{0} = \ket{\gamma(a, b)}\ .
    \end{align*}
\end{conjecture}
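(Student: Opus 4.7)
The plan is induction on $n$, the number of signal-operator calls. In the base case $n = 0$, the state $A_0 \ket{0}$ is constant in $(a,b)$, and $R \equiv 0$ forces $A_0 \ket{0} \in \vspan{\ket{0}, \ket{1}}$; any $A_0' \in SU(2)$ whose first column matches the projection of $A_0 \ket{0}$ onto $\vspan{\ket{0}, \ket{1}}$ then satisfies $\begin{bmatrix} A_0' & 0 \\ 0 & 1 \end{bmatrix} \ket{0} = A_0 \ket{0}$, and we may take $S_0$ to be vacuous.

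For the inductive step, decompose $\ket{\gamma(a,b)} = A_n \Tilde{W} \ket{\gamma_{n-1}(a,b)}$ and attempt to peel off the last signal-operator call. The aim is to find $S_n \in \{I, T\}$ (with $T$ the permutation swapping $\ket{1}$ and $\ket{2}$), a block-structured $B_n = \begin{bmatrix} A_n' & 0 \\ 0 & 1 \end{bmatrix}$, and a gauge $V \in SU(3)$ such that
\begin{align*}
A_n \Tilde{W} \ket{\gamma_{n-1}(a,b)} = B_n \, S_n \Tilde{W} S_n \, V \ket{\gamma_{n-1}(a,b)},
\end{align*}
with $V \ket{\gamma_{n-1}(a,b)}$ still having vanishing $\ket{2}$-component. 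Since $V$ can be absorbed into $A_{n-1}$, the resulting $(n-1)$-step protocol is again a valid instance of Protocol~\ref{def:three-dimensional-qsp} satisfying the hypothesis, so the inductive hypothesis finishes the argument. Intuitively, the choice $S_n = I$ versus $S_n = T$ corresponds to whether the final effective 2D step consumes an $a$- or a $b$-degree.

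To construct $(S_n, A_n', V)$, it is natural to examine the span $W_{n-1} \subseteq \C^3$ of the coefficient vectors $\{\ket{\gamma_{n-1, \vv{k}}}\}_{\vv{k}}$. When $W_{n-1}$ can be aligned via a diagonal (hence $\Tilde{W}$-commuting) unitary with one of the $\Tilde{W}$-invariant 2D subspaces $\vspan{\ket{0}, \ket{1}}$ or $\vspan{\ket{0}, \ket{2}}$, the peeling-off is essentially immediate: we rotate by $V$, apply $\Tilde{W}$ which preserves the two-dimensional structure, and absorb the rest of $A_n$ into $A_n'$. This handles the easy case $\dim W_{n-1} \le 2$, by an argument in the same spirit as the base case but applied to the projection onto the coefficient-span.

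The main obstacle is the case $\dim W_{n-1} = 3$, in which the $\ket{2}$-component of the intermediate state is genuinely nonzero and is only cancelled under the action of $A_n$. Here the $R \equiv 0$ condition translates into the constraint that, for each monomial $a^i b^j$ with nonzero coefficient vector, $A_n \Tilde{W}$ sends the corresponding vector into $\vspan{\ket{0}, \ket{1}}$, which is a rigid linear system on the columns of $A_n$. The crux of the proof, and what I expect to be hardest, is to leverage this rigidity --- combined with the specific diagonal structure of $\Tilde{W}$ --- to show that the protocol can always be refactored by some $V$ so as to lower $\dim W_{n-1}$ to at most $2$, reducing to the easy case. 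A careful combinatorial analysis of which monomials share which coefficient vectors (and hence which columns of $A_n$ are constrained) is likely unavoidable; this is also where the conjecture could potentially fail, as the refactoring may in principle require altering the protocol length, which the statement does not allow.
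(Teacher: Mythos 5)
This statement is a conjecture in the paper: the authors do not prove it, and they explicitly isolate the missing ingredient as a separate open claim (Conjecture~\ref{thm:low-dim-identity-resolution}, that any protocol segment leaving and returning to effective dimension $\le 2$ must act as $a^k b^h \cdot U$ on a suitable subspace). Your proposal does not close this gap either, and moreover the specific inductive scheme you set up cannot work as stated. Your inductive step requires a constant gauge $V \in SU(3)$ such that $V\ket{\gamma_{n-1}(a,b)}$ has vanishing $\ket{2}$-component, i.e., such that the truncated protocol again satisfies the hypothesis. But the paper's own example already rules this out: for the protocol with intermediate state $\frac{1}{\sqrt{3}}(1, a, b)^{T}$ (and final state $\frac{1}{\sqrt{3}}(1, \sqrt{2}\,ab, 0)^{T}$), the coefficient vectors of the state before the last processing operator span all of $\C^3$, so no constant unitary $V$ can push it into $\vspan{\ket{0}, \ket{1}}$. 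The invariant ``the prefix of the protocol also implements an effective-dimension-$2$ polynomial'' is simply false in general, so peeling off the final $\Tilde{W}$ one step at a time cannot succeed; the correct polynomial is only recovered by \emph{replacing} the offending segment with a different protocol (in the example, $\Tilde{w}_b \Tilde{w}_a$ acting in two dimensions), which may involve restructuring several steps at once rather than inserting a gauge.

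You do flag the case $\dim W_{n-1} = 3$ as ``the crux'' and concede it may be where the argument fails, but that concession is the whole problem: that case is exactly the open content of the conjecture, and your sketch offers no mechanism for the required refactoring beyond a hope that the linear constraints on $A_n$ are rigid enough. The paper's intended route is different in kind: identify maximal segments during which the effective dimension exceeds $2$, argue (this is Conjecture~\ref{thm:low-dim-identity-resolution}, unproven) that each such segment acts as a monomial $a^k b^h$ times a fixed unitary on a subspace of dimension $\ge 2$, and then re-implement that segment within two dimensions, e.g., via $U_1^\dag \Tilde{w}_a^k \Tilde{w}_b^h X \Tilde{w}_a^k \Tilde{w}_b^h X U_2$. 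So your base case and the easy case of effective dimension $\le 2$ throughout are fine, but the heart of the statement remains unproven in your proposal, just as it does in the paper.
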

\noindent The resulting two-dimensional protocol is thus a sequence of the $SU(2)$ operators $A_k'$ intertwined with calls to $\Tilde{w}_a = \diag(1, a), \Tilde{w}_b = \diag(1, b)$, or $\Tilde{w}_{ab} = \diag(a, b)$, and we simply split $\Tilde{w}_{ab} = X \Tilde{w}_a X \Tilde{w}_b$ to get the final construction compliant with Protocol~\ref{def:rossi-chuang-qsp-analytic}. To see why Conjecture~\ref{thm:conjecture-bidirectional-3d-2d} is not straightforward to prove, consider the following example:
\begin{align*}
    \begin{bmatrix}
        1 \\
        0 \\
        0
    \end{bmatrix}
    \stackrel{A_0}{\mapsto}
    \frac{1}{\sqrt{3}}
    \begin{bmatrix}
        1 \\
        1 \\
        1
    \end{bmatrix}
    \stackrel{\Tilde{W}}{\mapsto}
    \frac{1}{\sqrt{3}}
    \begin{bmatrix}
        1 \\
        a \\
        b
    \end{bmatrix}
    \stackrel{S\Tilde{W}S}{\mapsto}
    \frac{1}{\sqrt{3}}
    \begin{bmatrix}
        1 \\
        ab \\
        ab
    \end{bmatrix}
    \stackrel{A_2}{\mapsto}
    \frac{1}{\sqrt{3}}
    \begin{bmatrix}
        1 \\
        \sqrt{2} ab \\
        0
    \end{bmatrix}\ .
\end{align*}
\noindent The problem with this protocol is that the polynomial does not keep effective dimension $2$ throughout the evolution, and thus cannot be directly translated to Protocol~\ref{def:rossi-chuang-qsp-analytic}. This, however, does not mean this is the only protocol achieving this polynomial; indeed the following does the trick:
\begin{align*}
    \begin{bmatrix}
        1 \\ 0
    \end{bmatrix}
    \stackrel{A_0'}{\mapsto}
    \frac{1}{\sqrt{3}}
    \begin{bmatrix}
        1 \\ \sqrt{2}
    \end{bmatrix}
    \stackrel{\Tilde{w}_a}{\mapsto}
    \frac{1}{\sqrt{3}}
    \begin{bmatrix}
        1 \\ \sqrt{2} a
    \end{bmatrix}
    \stackrel{\Tilde{w}_b}{\mapsto}
    \frac{1}{\sqrt{3}}
    \begin{bmatrix}
        1 \\ \sqrt{2} a b
    \end{bmatrix}\ .
\end{align*}
As a general intuition, if a protocol applies $a$ and $b$ in superposition to two different subspaces $\calH_a$ and $\calH_b$, in order to return to effective dimension $2$, the three polynomials need to become linearly dependent again at some point, which means that $\calH_a$ will eventually receive also $b$ and vice versa (in other words, a segment of the protocol will act as $a^k b^h \cdot \id$ on $\calH_a \oplus \calH_b$). If this is the case, then the same segment can be implemented by multiplying with $a$ and $b$ in two separate steps, always keeping the state in two dimensions. The final claim we could not prove in order to conclude Conjecture~\ref{thm:conjecture-bidirectional-3d-2d} is these situations are the only possible cases that prevent an immediate translation.
\begin{conjecture}
    \label{thm:low-dim-identity-resolution}
    Let $\ket{\gamma(\vv{z})}, \ket{\gamma'(\vv{z})}$ be polynomial states of effective dimension $\le 2$ such that
    \begin{align*}
        \ket{\gamma(\vv{z})} = A_m \Tilde{W} A_{m-1} \cdots A_1 \Tilde{W} \ket{\gamma'(\vv{z})}
    \end{align*}
    but any intermediate state has effective dimension $> 2$. Then the operator $$A_m \Tilde{W} A_{m-1} \cdots A_1 \Tilde{W}$$ acts as $a^{k} b^{h} \cdot U$ in some pair of subspaces $\calH \rightarrow \calH'$ of dimension $d \ge 2$, for some $U \in SU(d)$, $k, h \in \N$.
\end{conjecture}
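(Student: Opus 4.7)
The plan is to analyze the matrix polynomial $M(a,b) = A_m \Tilde W A_{m-1} \cdots A_1 \Tilde W$ directly, combining three ingredients: (a)~its paraunitary structure, i.e.\ unitarity for every $(a,b) \in \T^2$; (b)~the fact that its entries are analytic polynomials in $a,b$; and (c)~the constraint that $M$ sends the $2$-dimensional input subspace $\calH = \mathrm{span}\{\ket{\gamma'(\vv{z})}\}$ into the $2$-dimensional output subspace $\calH' = \mathrm{span}\{\ket{\gamma(\vv{z})}\}$, while at every intermediate step the effective dimension is exactly~$3$.

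The first step is to promote the pointwise condition $M(\vv{z})\ket{\gamma'(\vv{z})} \in \calH'$ to the full subspace condition $M(\vv{z})\calH \subseteq \calH'$ for every $\vv{z}$. Concretely, the projection $P_{(\calH')^\perp} M(\vv{z})|_\calH$ is a $1 \times 2$ polynomial matrix that must annihilate the coordinate vector of $\ket{\gamma'(\vv{z})}$ in a fixed basis of $\calH$; since $\ket{\gamma'}$ sweeps out all of $\calH$ as $\vv{z}$ varies, and since paraunitarity bounds the total degree of $M$, one should be able to argue that this matrix vanishes identically. Once this holds, $M(\vv{z})|_\calH$ is an analytic polynomial matrix that is a unitary isomorphism $\calH \to \calH'$ for every $\vv{z} \in \T^2$. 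I would then factor out the largest common monomial $a^k b^h$ from its entries, leaving an \emph{irreducible} polynomial matrix $N(\vv{z})$ that is still an isometry at every torus point.

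The hardest part will be showing that $N(\vv{z})$ is actually constant. Unlike the univariate analytic case, a $2\times 2$ analytic polynomial matrix can be pointwise unitary on the torus without being a monomial times a constant --- for instance $\mathrm{diag}(a,b)$ itself. To rule out such behaviour, one must invoke the hypothesis that no intermediate state has effective dimension~$\le 2$: intuitively, a non-trivial polynomial dependence in $N$ would pinpoint an intermediate step at which the effective dimension drops to~$2$, contradicting the hypothesis. Making this precise will likely require an induction on~$m$ together with a careful analysis of which polynomial dependencies can be created by each $\Tilde W A_k$ without collapsing the effective dimension below~$3$. A complete argument would thus amount to a structure theorem for paraunitary bivariate polynomial matrices of the form $A_m \Tilde W \cdots A_1 \Tilde W$ whose image is confined to a low-dimensional subspace; this structure theorem appears to be the central missing ingredient.
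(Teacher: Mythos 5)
This statement is a \emph{conjecture} in the paper; the authors explicitly state that they could not prove it, and only offer an informal intuition (that forcing the effective dimension back down to $2$ requires the three polynomials produced along the way to become linearly dependent again, which they argue should manifest as a segment of the protocol acting as $a^k b^h \cdot \id$ on a pair of subspaces). So there is no proof in the paper to compare against. You are correct to treat the statement as unproven, and you yourself flag the central missing ingredient in your last sentence, so you are not claiming to have closed the gap.

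That said, your sketch already has a concrete error in its first step that would need to be repaired before the rest could even be attempted. You want to promote the pointwise condition $M(\vv{z})\ket{\gamma'(\vv{z})} \in \calH'$ to the subspace condition $M(\vv{z})\calH \subseteq \calH'$ for all $\vv{z}$. But for each fixed $\vv{z}$ you only know that $M(\vv{z})$ maps the \emph{single} vector $\ket{\gamma'(\vv{z})}$ into $\calH'$, and as $\vv{z}$ varies, $M(\vv{z})$ varies too; the fact that the family $\{\ket{\gamma'(\vv{z})}\}_{\vv{z}}$ spans $\calH$ gives you one polynomial identity $\langle \chi | M(\vv{z}) | \gamma'(\vv{z}) \rangle \equiv 0$ for each $\ket{\chi} \in (\calH')^\perp$, not a family of identities $\langle \chi | M(\vv{z}) | \varphi \rangle \equiv 0$ for all $\ket{\varphi} \in \calH$. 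Indeed, the example in the paper (the protocol reaching $\tfrac{1}{\sqrt{3}}(1,\sqrt{2}\,ab,0)^{T}$) gives an $M(\vv{z})$ that does \emph{not} preserve a fixed pair of $2$-dimensional subspaces pointwise; only its action on the specific trajectory $\ket{\gamma'(\vv{z})}$ lands in $\calH'$. The conjecture's claim is that \emph{some} $(\calH,\calH')$ of dimension $\ge 2$ exists on which $M$ acts as a monomial unitary, not that your natural candidate spans work; the paper's intuition in fact suggests the relevant subspaces are the internal $\calH_a\oplus\calH_b$ that the protocol splits into, not the span of the input/output states. Your subsequent factoring-out-a-monomial step and the hoped-for ``structure theorem'' then rest on this unjustified promotion, so the whole chain would need to be rebuilt on a different choice of subspaces, which would make your later steps unrecognizably different from what you wrote.
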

\noindent Conjecture~\ref{thm:conjecture-bidirectional-3d-2d} would follow directly, since every unitary of the form $a^k b^h \cdot U$ can be implemented trivially, keeping only polynomials with effective dimension $2$ (here, the construction $U_1^\dag \Tilde{w}^k_a \Tilde{w}^h_b X \Tilde{w}^k_a \Tilde{w}^h_b X U_2$ is an example of such protocol, where $U = U_1^\dag U_2$ such that $U_1, U_2$ map our state within $\vspan{\ket{0}, \ket{1}}$).

We highlight that Conjecture~\ref{thm:low-dim-identity-resolution} becomes trivial when we replace the bound on the effective dimension above, from $2$ to $1$: such state would be of the form $a^k b^h \ket{\psi}$ for some quantum state $\ket{\psi}$ independent of $a, b$. We would necessarily have $\ket{\gamma(\vv{z})} = a^k b^h \ket{\phi}, \ket{\gamma'(\vv{z})} = a^{k'} b^{h'} \ket{\psi}$, with $k' \le k, h' \le h$ ($\Tilde{W}$ cannot decrease the degree of the polynomial state). The claim follows by taking any $U : \ket{\psi} \mapsto \ket{\phi}$.

\subsection{Conditions for extractability of one step}
\label{sec:three-dimensional-nec-suf-one-step}
\noindent We now give some conditions for the existence of an instance of Protocol~\ref{def:three-dimensional-qsp} to construct a given polynomial state $\ket{\gamma(\vv{z})}$. These conditions work for general three-dimensional polynomials, but we remember that in the case such polynomial has effective dimension $2$ we can also retrieve a construction in the sense of Protocol~\ref{def:rossi-chuang-qsp-analytic} (provided Conjecture~\ref{thm:conjecture-bidirectional-3d-2d} holds).

Before we dive into stating necessary and/or sufficient conditions, we first provide an intuition on what happens throughout an evolution of Protocol~\ref{def:three-dimensional-qsp}. We start from the state $A_0 \ket{0}$, which is an arbitrary three-dimensional quantum state (in other words, the coefficient vector $\ket{\gamma_{0, 0}} = A_0 \ket{0}$, and $\ket{\gamma_{\vv{k}}} = 0$ for $\vv{k} \neq (0, 0)$). An application of $\Tilde{W}$ splits the space into the three components $\{ \ket{0}, \ket{1}, \ket{2} \}$ of the computational basis, more explicitly
\begin{align*}
    \Tilde{W}
    \begin{bmatrix}
        \psi_0 \\
        \psi_1 \\
        \psi_2
    \end{bmatrix}
    = 
    \psi_0 \ket{0} + \psi_1 \ket{1} a + \psi_2 \ket{2} b\ ,
\end{align*}
which means that, after one step, the coefficient vectors $\ket{\gamma_{0, 0}}, \ket{\gamma_{1, 0}}, \ket{\gamma_{0, 1}}$ are pairwise orthogonal (the choice of the subsequent processing operator $A_1$ does not alter this fact). We can see the evolution of the coefficient vectors as propagation in a lattice (see Figure~\ref{fig:three-dim-qsp-evolution}): after $n$ steps, the points corresponding to non-zero coefficients form a right triangle with the two short sides of length $n$. Before applying $A_n$, we can see that the coefficient vectors satisfy the following conditions:
\begin{itemize}
    \item $\ket{\gamma_{k, 0}} \in \vspan{ \ket{0}, \ket{1} }$ for $0 \le k \le n$;
    \item $\ket{\gamma_{0, k}} \in \vspan{ \ket{0}, \ket{2} }$ for $0 \le k \le n$;
    \item $\ket{\gamma_{k, n - k}} \in \vspan{ \ket{1}, \ket{2} }$ for $0 \le k \le n$.
\end{itemize}
These three sets of vectors correspond to the three sides of the triangle in Figure~\ref{fig:three-dim-qsp-evolution}. In particular, notice that the three endpoints of the triangle must satisfy two of these conditions, and in particular, they will be pairwise orthogonal.

\begin{figure}
    \centering
    \begin{tikzpicture}[scale=0.9]
    \filldraw[black] (2.5, 0) circle (2pt);
    \node[inner sep=7pt,anchor=south] at (2.5, 0) {$\ket{\gamma_{0, 0}} = \ket{0}$};

    \node at (4.5, -0.5) {\Large $\Tilde{W} A_0$};
    \draw[->, thick] (4, -1) -- (5, -1);

    \filldraw[black] (6.5, 0) circle (2pt);
    \node[inner sep=7pt,anchor=south] at (6, 0) {$\ket{\gamma_{0, 0}} \propto \ket{0}$};
    \filldraw[black] (8.5, 0) circle (2pt);
    \node[inner sep=7pt,anchor=south] at (8.5, 0) {$\ket{\gamma_{1, 0}} \propto \ket{1}$};
    \filldraw[black] (6.5, -2) circle (2pt);
    \node[inner sep=7pt,anchor=west] at (6.5, -2) {$\ket{\gamma_{0, 1}} \propto \ket{2}$};

    \draw[->, thick, red] (7, 0) -- (8, 0) node[pos=0.5, anchor=north] {$\times a$};
    \draw[->, thick, blue] (6.5, -0.5) -- (6.5, -1.5) node[pos=0.5, anchor=west] {$\times b$};

    \node at (10, -0.5) {\Large $\Tilde{W} A_1$};
    \draw[->, thick] (9.5, -1) -- (10.5, -1);

    \filldraw[black] (12, 0) circle (2pt);
    \node[inner sep=7pt,anchor=south] at (12, 0) {$\ket{\gamma_{0, 0}} \propto \ket{0}$};
    \filldraw[black] (14, 0) circle (2pt);
    \node[inner sep=7pt,anchor=south] at (14, 0) {};
    \filldraw[black] (16, 0) circle (2pt);
    \node[inner sep=7pt,anchor=south] at (16, 0) {$\ket{\gamma_{2, 0}} \propto \ket{1}$};
    \filldraw[black] (12, -2) circle (2pt);
    \node[inner sep=7pt,anchor=west] at (12, -2) {};
    \filldraw[black] (14, -2) circle (2pt);
    \node[inner sep=7pt,anchor=west] at (14, -2) {$\ket{\gamma_{1, 1}} \perp \ket{0}$};
    \filldraw[black] (12, -4) circle (2pt);
    \node[inner sep=7pt,anchor=west] at (12, -4) {$\ket{\gamma_{0, 2}} \propto \ket{2}$};

    \draw[->, thick, red] (12.5, 0) -- (13.5, 0) node[pos=0.5, anchor=north] {$\times a$};
    \draw[->, thick, blue] (12, -0.5) -- (12, -1.5) node[pos=0.5, anchor=west] {$\times b$};
    \draw[->, thick, red] (14.5, 0) -- (15.5, 0) node[pos=0.5, anchor=north] {$\times a$};
    \draw[->, thick, blue] (14, -0.5) -- (14, -1.5) node[pos=0.5, anchor=west] {$\times b$};
    \draw[->, thick, red] (12.5, -2) -- (13.5, -2) node[pos=0.5, anchor=north] {$\times a$};
    \draw[->, thick, blue] (12, -2.5) -- (12, -3.5) node[pos=0.5, anchor=west] {$\times b$};
\end{tikzpicture}
    \caption{Evolution of the coefficient vectors of a three-dimensional polynomial state throughout two steps of Protocol~\ref{def:three-dimensional-qsp}. The evolution can be visualized as a propagation on a two-dimensional lattice, where each point $(x, y)$ corresponds to the coefficient vector of $a^x b^y$. Right after applying $\Tilde{W}$, the coefficient vectors on the boundaries are proportional to an element of the computational basis. The processing operators $A_k$ can change the basis of these vectors, but not the inner product between them.}
    \label{fig:three-dim-qsp-evolution}
\end{figure}
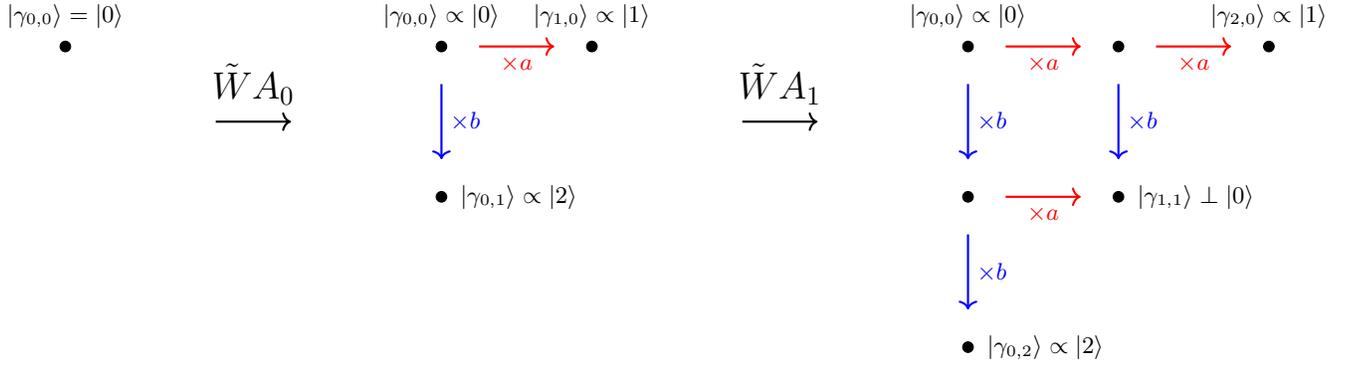

\begin{theorem}[Necessary and sufficient condition for the extraction of one step]
    \label{thm:condition-for-single-step}
    Let $\ket{\gamma(a, b)} = \sum_{k, h} \ket{\gamma_{k, h}} a^{k} b^{h}$ be a three-dimensional polynomial state of degree~$n$. There exists $A_n \in SU(3)$ such that $\Tilde{W}^\dag A_n^\dag \ket{\gamma(a, b)}$ has degree $n - 1$ if and only if there exists an orthonormal basis $\ket{\psi_0}, \ket{\psi_1}, \ket{\psi_2}$ such that
    \begin{enumerate}[label=(\roman*)]
        \item $\braket{\psi_2}{\gamma_{k, 0}} = 0$ for $0 \le k \le n$;
        \item $\braket{\psi_1}{\gamma_{0, k}} = 0$ for $0 \le k \le n$;
        \item $\braket{\psi_0}{\gamma_{k, n - k}} = 0$ for $0 \le k \le n$.
    \end{enumerate}
    If the three endpoints $\ket{\gamma_{0, n}}, \ket{\gamma_{n, 0}}, \ket{\gamma_{0,0}}$ are non-zero, then the conditions above are equivalent to
    \begin{enumerate}[label=(\roman*')]
        \item $\braket{\gamma_{0, n}}{\gamma_{k, 0}} = 0$ for $0 \le k \le n$;
        \item $\braket{\gamma_{n, 0}}{\gamma_{0, k}} = 0$ for $0 \le k \le n$;
        \item $\braket{\gamma_{0, 0}}{\gamma_{k, n - k}} = 0$ for $0 \le k \le n$.
    \end{enumerate}
\end{theorem}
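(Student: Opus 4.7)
The plan is to view the problem as a change of basis combined with an explicit calculation of what it means for $\Tilde{W}^\dag A_n^\dag \ket{\gamma(a,b)}$ to be a polynomial state of degree $n-1$.

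First I would set $\ket{\delta(a,b)} = A_n^\dag \ket{\gamma(a,b)}$ and define the orthonormal basis $\ket{\psi_j} = A_n \ket{j}$, so that $\braket{\psi_j}{\gamma_{k,h}} = \braket{j}{\delta_{k,h}}$. Expanding $\Tilde{W}^\dag \ket{\delta(a,b)}$ component by component gives
\begin{align*}
    \Tilde{W}^\dag \ket{\delta(a,b)} = \sum_{k,h} \braket{0}{\delta_{k,h}} a^k b^h \ket{0} + \sum_{k,h} \braket{1}{\delta_{k,h}} a^{k-1} b^h \ket{1} + \sum_{k,h} \braket{2}{\delta_{k,h}} a^k b^{h-1} \ket{2}.
\end{align*}
For this expression to be an honest polynomial state of degree at most $n-1$, three independent requirements must hold: (a) the $\ket{1}$-component must have no $a^{-1}$ terms, i.e.\ $\braket{1}{\delta_{0,h}} = 0$ for all $h$; (b) the $\ket{2}$-component must have no $b^{-1}$ terms, i.e.\ $\braket{2}{\delta_{k,0}} = 0$ for all $k$; (c) the $\ket{0}$-component (whose degree is not lowered by $\Tilde{W}^\dag$) must have no degree-$n$ monomials, i.e.\ $\braket{0}{\delta_{k,n-k}} = 0$ for all $k$. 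Rewriting these in terms of the basis $\{\ket{\psi_j}\}$ yields exactly conditions (ii), (i), (iii). This establishes both directions of the biconditional: the forward direction uses $\ket{\psi_j} = A_n \ket{j}$, and the backward direction constructs $A_n \in SU(3)$ as the unique (up to a global phase fixable by the determinant) unitary sending the computational basis to $\{\ket{\psi_j}\}$.

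For the second part I would argue that, when the three endpoints $\ket{\gamma_{0,0}}, \ket{\gamma_{n,0}}, \ket{\gamma_{0,n}}$ are non-zero, conditions (i)--(iii) force each endpoint to be proportional to one of the $\ket{\psi_j}$. Indeed, $\ket{\gamma_{n,0}}$ lies on both the $h=0$ side and the $k+h=n$ diagonal, so (i) and (iii) give $\braket{\psi_2}{\gamma_{n,0}} = \braket{\psi_0}{\gamma_{n,0}} = 0$, hence $\ket{\gamma_{n,0}} \propto \ket{\psi_1}$; symmetric arguments give $\ket{\gamma_{0,n}} \propto \ket{\psi_2}$ and $\ket{\gamma_{0,0}} \propto \ket{\psi_0}$. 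Substituting these proportionalities into (i)--(iii) produces (i')--(iii').

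Conversely, I would check that (i')--(iii') together with the non-vanishing of the endpoints force the three endpoints to be pairwise orthogonal: setting $k=0, n$ in (i') gives $\braket{\gamma_{0,n}}{\gamma_{0,0}} = \braket{\gamma_{0,n}}{\gamma_{n,0}} = 0$, and setting $k=0$ in (ii') gives $\braket{\gamma_{n,0}}{\gamma_{0,0}} = 0$. Normalising the three endpoints then produces an orthonormal basis $\ket{\psi_0}, \ket{\psi_1}, \ket{\psi_2}$ for which (i')--(iii') specialise to (i)--(iii), closing the equivalence.

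I do not expect a genuine obstacle here; the whole argument is essentially a careful bookkeeping of exponents in $\Tilde{W}^\dag = \diag(1, a^{-1}, b^{-1})$. The only subtle point is the second equivalence, where one must be mindful that the endpoints carry a double membership in the three subsets indexed by (i)--(iii), which is precisely what makes the non-vanishing hypothesis pin down $\ket{\psi_j}$ uniquely (up to phases) without further choice.
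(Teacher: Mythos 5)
Your proposal is correct and follows essentially the same route as the paper: choose $A_n : \ket{j} \mapsto \ket{\psi_j}$, track which coefficient vectors must avoid which basis directions so that $\Tilde{W}^\dag$ neither creates negative powers nor leaves a degree-$n$ term, and then use the double membership of the endpoints to pass between (i)--(iii) and (i')--(iii'). The only difference is cosmetic: you verify the necessity direction explicitly inside the same component-wise expansion, whereas the paper defers it to the lattice-propagation discussion preceding the theorem.
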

\begin{proof}
    We have already shown condition (i)-(iii) to be necessary, we prove their sufficiency. The polynomial $\ket{\gamma'(a, b)} = \Tilde{W}^\dag A_n^\dag \ket{\gamma(a, b)}$ has degree $n - 1$ if and only if $\ket{\gamma'_{k, n-k}} = 0$ for every $0 \le k \le n$ (higher-degree coefficients will be zero by construction). Moreover, we need $\ket{\gamma'_{-1, k}} = \ket{\gamma'_{k, -1}} = 0$ for $0 \le k \le n$ in order to have a valid analytic polynomial. Since $A_n^\dag \sum_{k, h} \ket{\gamma_{k,h}} a^k b^h = \sum_{k, h} A_n^\dag \ket{\gamma_{k,h}} a^k b^h$, this degree reduction will be guaranteed if the following conditions are true:
    \begin{itemize}
        \item $A_n^\dag \ket{\gamma_{k, 0}} \in \vspan{\ket{0}, \ket{1}}$ for $0 \le k \le n$;
        \item $A_n^\dag \ket{\gamma_{0, k}} \in \vspan{\ket{0}, \ket{2}}$ for $0 \le k \le n$;
        \item $A_n^\dag \ket{\gamma_{k, n-k}} \in \vspan{\ket{1}, \ket{2}}$ for $0 \le k \le n$.
    \end{itemize}
    Taking $A_n: \ket{j} \mapsto \ket{\psi_j}$ satisfies this condition. Furthermore, conditions (i)-(iii) imply the following:
    \begin{align*}
        \ket{\gamma_{0,n}} \propto \ket{\psi_2}, \ \ \ket{\gamma_{n,0}} \propto \ket{\psi_1}, \ \ \ket{\gamma_{0,0}} \propto \ket{\psi_0}
    \end{align*}
    because these three vectors must satisfy two of the three conditions (the only subtlety here is that they might be zero). If all of them are non-zero, then by conditions (i')-(iii') they are pairwise orthogonal, proving that also (i')-(iii') implies (i)-(iii).
\end{proof}
\noindent These extend and simplify the necessary condition given in~\cite{nemethVariantsMultivariateQuantum2023}. In the univariate case, conditions (i)-(iii) boil down to $\braket{\gamma_0}{\gamma_n} = 0$, i.e., the coefficient vectors at the two endpoints must be orthogonal (which is in turn guaranteed by the fact that $\braket{\gamma(z)}{\gamma(z)} \equiv 1$). This allows to apply induction easily, proving Theorems~\ref{thm:univariate-qsp-wz-laurent}-\ref{thm:univariate-qsp-wz-analytic}. Unfortunately, unlike in the univariate case, the conditions of Theorem~\ref{thm:condition-for-single-step} are not sufficient for the existence of a full protocol implementing~$\ket{\gamma(a, b)}$: although $\Tilde{W}^\dag A_n^\dag \ket{\gamma(a, b)}$ has lower degree, it is not guaranteed that the new polynomial in turn satisfies these conditions.

We apply our result to a concrete example (converted to analytic from~\cite{nemethVariantsMultivariateQuantum2023}):
\begin{align*}
    P(a, b) & = a^2 b^2 + 1 - \frac{122 + 8i}{37} (a b^2 + a) + \frac{362 - 248i}{111} (a^2 b + b) + \\
    & + \frac{114 + 56i}{37} (a^2 + b^2) + \left(\frac{692}{111} - \frac{719i}{222}\right) ab \\
    Q(a, b) & = a^2 b^2 - 1 - \frac{122 + 66i}{37} (a b^2 - a) + \\ 
    & - \frac{56 + 114i}{37} (a^2 - b^2) + \frac{362 - 418i}{111} (a^2 b - b)
\end{align*}
Although $\ket{\gamma(a,b)} = P(a, b) \ket{0} + Q(a, b) \ket{1}$ is a polynomial state (up to a normalization factor), it does not satisfy the conditions of Theorem~\ref{thm:condition-for-single-step}: the two sets of vectors $\{ \ket{\gamma_{0,1}}, \ket{\gamma_{0,2}} \}, \{ \ket{\gamma_{1,0}}, \ket{\gamma_{2,0}} \}$ are both linearly independent, and they span $\{ \ket{0}, \ket{1} \}$. This means that there cannot exist $\ket{\psi_2}, \ket{\psi_1}$ satisfying conditions (i) and (ii) of Theorem~\ref{thm:condition-for-single-step} (they would need to be both $\propto \ket{2}$). We state the sufficient condition we just used for the non-implementability of a two-dimensional polynomial state more explicitly.
\begin{corollary}
    \label{thm:sufficient-condition-unimplementability}
    Let $\ket{\gamma(a, b)} = P(a, b) \ket{0} + Q(a, b) \ket{1} = \sum_{k, h} \ket{\gamma_{k, h}} a^k b^h$ be a polynomial state of degree~$n$. If if holds that $$\vspan{\ket{\gamma_{k, 0}}}_k = \vspan{\ket{\gamma_{0, k}}}_k = \vspan{\ket{0}, \ket{1}},$$ then $\ket{\gamma(a, b)}$ cannot be implemented with Protocol~\ref{def:three-dimensional-qsp} (and thus, not even with Protocol~\ref{def:rossi-chuang-qsp-analytic}).
\end{corollary}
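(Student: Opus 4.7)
The proposal is to proceed by contradiction, reducing the question of full implementability to the one-step extractability criterion of Theorem~\ref{thm:condition-for-single-step}. Assume $\ket{\gamma(a,b)}$ can be produced by an instance of Protocol~\ref{def:three-dimensional-qsp} of length $n$. Then the final processing operator $A_n$ must make $\Tilde{W}^\dagger A_n^\dagger \ket{\gamma(a,b)}$ into a valid polynomial state of degree $n-1$, so Theorem~\ref{thm:condition-for-single-step} guarantees the existence of an orthonormal basis $\ket{\psi_0}, \ket{\psi_1}, \ket{\psi_2}$ of $\C^3$ satisfying conditions (i)--(iii).

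The key observation is that the hypothesis of the corollary pins down $\ket{\psi_1}$ and $\ket{\psi_2}$ completely, and incompatibly. Condition (i) demands that $\ket{\psi_2}$ be orthogonal to every $\ket{\gamma_{k,0}}$, i.e., orthogonal to the whole span $\vspan{\ket{\gamma_{k,0}}}_k$. By assumption that span equals $\vspan{\ket{0},\ket{1}}$, so inside $\C^3$ the only vectors orthogonal to it are multiples of $\ket{2}$, forcing $\ket{\psi_2}\propto\ket{2}$. The identical argument applied to condition (ii) and $\vspan{\ket{\gamma_{0,k}}}_k=\vspan{\ket{0},\ket{1}}$ gives $\ket{\psi_1}\propto\ket{2}$ as well. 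But $\ket{\psi_1}$ and $\ket{\psi_2}$ are orthogonal members of an orthonormal basis, and cannot both be proportional to the same unit vector. This contradiction shows that no final step can be extracted, hence no protocol of the required form exists. Non-implementability under Protocol~\ref{def:rossi-chuang-qsp-analytic} then follows because, as argued in Section~\ref{sec:three-dimensional-qsp-equivalence}, Protocol~\ref{def:rossi-chuang-qsp-analytic} embeds into Protocol~\ref{def:three-dimensional-qsp}.

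There is essentially no main obstacle here: the corollary is a direct structural consequence of Theorem~\ref{thm:condition-for-single-step}, and the whole proof is an exercise in orthogonality in three dimensions. The only small subtlety worth making explicit is that failure of the one-step condition immediately rules out full decomposability, since any implementing protocol necessarily has a last step that must itself be extractable; once this is stated the rest is immediate and there is no need to argue inductively about intermediate polynomial states.
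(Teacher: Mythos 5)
Your core argument is exactly the paper's: under the hypothesis, condition (i) of Theorem~\ref{thm:condition-for-single-step} forces $\ket{\psi_2}$ to be orthogonal to $\vspan{\ket{0},\ket{1}}$, hence $\ket{\psi_2}\propto\ket{2}$, condition (ii) likewise forces $\ket{\psi_1}\propto\ket{2}$, and orthonormality gives the contradiction; the paper derives the corollary in precisely this way and, as you note, the implication for Protocol~\ref{def:rossi-chuang-qsp-analytic} only uses the (proven) embedding, not Conjecture~\ref{thm:conjecture-bidirectional-3d-2d}.

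The one step you should not wave away is the closing claim that ``any implementing protocol necessarily has a last step that must itself be extractable.'' Theorem~\ref{thm:condition-for-single-step} characterizes the existence of a step lowering the degree from $n$ to $n-1$, but a protocol implementing a degree-$n$ state need not have length $n$: if the protocol has $m>n$ signal calls, undoing its last step yields a valid polynomial of degree at most $m-1\ge n$, which need not be $n-1$, so the theorem as you invoke it does not literally apply, and your proof as written only excludes length-$n$ implementations while the corollary asserts non-implementability by any instance of Protocol~\ref{def:three-dimensional-qsp}. The repair is immediate and is what the paper's lattice/boundary argument actually provides: for any final processing operator $A_m$, undoing $A_m$ and then $\Tilde{W}$ must produce an analytic polynomial with no negative powers of $a$ or $b$, which already forces $\braket{\psi_2}{\gamma_{k,0}}=0$ and $\braket{\psi_1}{\gamma_{0,k}}=0$ with $\ket{\psi_j}=A_m\ket{j}$, independently of any degree lowering; your orthogonality contradiction then goes through for every protocol length. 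With that adjustment your proof coincides with the paper's.
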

\noindent This gives a very simple condition to check in order to prove the impossibility. An important remark is that this claim does not need Conjecture~\ref{thm:conjecture-bidirectional-3d-2d} to hold, since we proved that Protocol~\ref{def:rossi-chuang-qsp-analytic} is at least a subset of Protocol~\ref{def:three-dimensional-qsp}. 
\iffalse
We conclude by giving an analogous condition directly for Protocol~\ref{def:rossi-chuang-qsp-analytic}.
\begin{theorem}
    Let $\ket{\gamma(a, b)} = \sum_{k, h} \ket{\gamma_{k,h}} a^k b^h$ be a polynomial state where $n_A + n_B = n$ are the highest exponents of $a, b$ appearing in $\ket{\gamma(a, b)}$, respectively. There exists $A \in SU(2)$ such that $\Tilde{w}_a^\dag A^\dag \ket{\gamma(a, b)}$ has degree $n - 1$ if and only if there exists an orthonormal basis $\ket{\psi_0}, \ket{\psi_1}$ such that:
    \begin{itemize}
        \item $\braket{\psi_0}{\gamma_{k,0}} = 0$ for $0 \le k \le n_B$;
        \item $\braket{\psi_1}{\gamma_{k, n_A}} = 0$ for $0 \le k \le n_B$.
    \end{itemize}
\end{theorem}
\noindent Analogously, the specular condition holds for $\Tilde{w}_b$. Also Protocol~\ref{def:rossi-chuang-qsp-analytic} can be seen as a propagation on the lattice, although it does not propagate as a triangle (like in Figure~\ref{fig:three-dim-qsp-evolution}), but as a $n_A \times n_B$ rectangle.
\fi
\subsection{A sufficient condition for decomposability}
\label{sec:sufficient-condition-full-decomposability}
\noindent We use what we found in the previous section to first show a constructive result, i.e., a \emph{sufficient} condition for the existence of an instance of Protocol~\ref{def:three-dimensional-qsp}.
\begin{theorem}
    \label{thm:sufficient-condition-full-decomposition}
    Let $\ket{\gamma(a, b)} = \sum_{k, h} \ket{\gamma_{k, h}} a^k b^h$ be a polynomial state of degree~$n$ such that the coefficient vectors $\ket{\gamma_{0,0}}, \ket{\gamma_{n,0}}, \ket{\gamma_{0,n}}$, i.e., the coefficients of $1, a^n, b^n$ are non-zero. Then there exist $A_0, \ldots, A_n \in SU(3)$ such that
    \begin{align*}
        A_n \Tilde{W} A_{n-1} \Tilde{W} \cdots \Tilde{W} A_0 \ket{0} = \ket{\gamma(a, b)}\ .
    \end{align*}
\end{theorem}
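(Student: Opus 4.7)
The plan is to proceed by induction on the degree $n$. The base case $n=0$ is immediate: $\ket{\gamma(a,b)}$ is a constant unit vector and any $A_0\in SU(3)$ with $A_0\ket{0}=\ket{\gamma}$ works.

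For the inductive step, my first task is to show that Theorem~\ref{thm:condition-for-single-step} applies. Since the three corner coefficients $\ket{\gamma_{0,0}},\ket{\gamma_{n,0}},\ket{\gamma_{0,n}}$ are non-zero by hypothesis, it suffices to verify the simpler conditions (i')--(iii'). These I would extract directly from the normalization identity $\braket{\gamma(a,b)}{\gamma(a,b)}\equiv 1$ on $\T^2$: expanding yields the Laurent polynomial $\sum_{k,h,k',h'}\braket{\gamma_{k',h'}}{\gamma_{k,h}}\, a^{k-k'}b^{h-h'}$, and each non-constant coefficient must vanish. For the coefficient of $a^{k}b^{-n}$ with $0\le k\le n$, the total-degree constraints on the bra and the ket, together with $h'-h=n$, collapse the sum to the single pairing $\braket{\gamma_{0,n}}{\gamma_{k,0}}$, which must therefore be zero; this is precisely (i'). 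The coefficients of $a^{-n}b^{k}$ and of $a^{k}b^{n-k}$ collapse analogously to single pairings, yielding (ii') and (iii') respectively.

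Theorem~\ref{thm:condition-for-single-step} then supplies $A_n\in SU(3)$ with $A_n\ket{j}=\ket{\psi_j}$, where $\ket{\psi_0}\propto\ket{\gamma_{0,0}}$, $\ket{\psi_1}\propto\ket{\gamma_{n,0}}$, $\ket{\psi_2}\propto\ket{\gamma_{0,n}}$, such that $\ket{\gamma'(a,b)}:=\Tilde{W}^\dag A_n^\dag\ket{\gamma(a,b)}$ is an analytic polynomial state of degree $n-1$. Tracking the $a^{-1},b^{-1}$ shifts produced by $\Tilde{W}^\dag$ term by term gives the closed form
\begin{align*}
    \ket{\gamma'_{p,q}} \;=\; \braket{\psi_0}{\gamma_{p,q}}\ket{0} \;+\; \braket{\psi_1}{\gamma_{p+1,q}}\ket{1} \;+\; \braket{\psi_2}{\gamma_{p,q+1}}\ket{2}.
\end{align*}
Specialising to $(p,q)=(0,0),(n-1,0),(0,n-1)$, the components along $\ket{0},\ket{1},\ket{2}$ of the three new corners include $\braket{\psi_0}{\gamma_{0,0}}$, $\braket{\psi_1}{\gamma_{n,0}}$, $\braket{\psi_2}{\gamma_{0,n}}$, respectively. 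By the proportionalities above, each of these inner products equals (up to a unit-modulus phase) the norm of the corresponding original corner, hence is non-zero. Since each of these non-vanishing contributions lives on a distinct basis vector, it cannot be cancelled by the other two summands, so all three corners of $\ket{\gamma'(a,b)}$ are non-zero. The inductive hypothesis then yields $A_0,\ldots,A_{n-1}$ implementing $\ket{\gamma'(a,b)}$, and prepending these to $A_n$ (interleaved with the corresponding copies of $\Tilde{W}$) completes the construction.

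The step I expect to be the main obstacle is precisely this preservation of the ``three corners non-zero'' hypothesis across one layer of the induction: if cancellations among the three summands of $\ket{\gamma'_{p,q}}$ could ever wipe out a new corner, the naive induction would collapse and a more global argument (for instance a compactness or dimension-counting argument over the set of implementable polynomial states) would be needed. The rigid orthogonal structure enforced by Theorem~\ref{thm:condition-for-single-step}, which places the three non-trivial contributions on mutually orthogonal basis vectors, is exactly what rules out such a failure and makes the induction close.
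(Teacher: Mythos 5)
Your proof is correct and follows essentially the same strategy as the paper's: induction on the degree, verifying conditions (i')--(iii') from the normalization identity (your ``collapse to a single pairing'' is the same observation as the paper's choice of $\vv{j}$), invoking Theorem~\ref{thm:condition-for-single-step} to produce $A_n$, and then showing the three corners of the reduced polynomial remain non-zero so the induction closes. Your closed-form expression for $\ket{\gamma'_{p,q}}$ makes the last step more explicit than the paper's, which simply notes that $A_n^\dag$ maps the corners onto distinct computational basis vectors and that $\tilde{W}^\dag$ shifts them, but the underlying argument is identical.
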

\begin{proof}
    The claim is trivial for $n = 0$, since any zero-degree polynomial is simply a quantum state $\ket{\psi}$, and any $A_0$ containing $\ket{\psi}$ as first column does the trick. Therefore, let us consider $n > 0$.

    Since $\ket{\gamma_{0,0}}, \ket{\gamma_{n,0}}, \ket{\gamma_{0,n}} \neq 0$, the conditions (i')-(iii') of Theorem~\ref{thm:condition-for-single-step} must hold in order for a unitary $A_n$ to lower the degree. This is actually guaranteed by the normalization condition:
    \begin{align*}
        \braket{\gamma(\vv{z})}{\gamma(\vv{z})} = \sum_{\vv{k}, \vv{h}} \braket{\gamma_{\vv{k}}}{\gamma_{\vv{h}}} \vv{z}^{\vv{h} - \vv{k}} = \sum_{\vv{j}} \sum_{\vv{k}} \braket{\gamma_{\vv{k}}}{\gamma_{\vv{k} + \vv{j}}} \vv{z}^{\vv{j}} \stackrel{!}{\equiv} 1
    \end{align*}
    The coefficient of $\vv{z}^{\vv{j}}$ for $\vv{j} \neq 0$ must be zero. In particular
    \begin{itemize}
        \item by taking $\vv{j} = (k, -n)$, we obtain $\braket{\gamma_{0,n}}{\gamma_{k,0}} = 0$, condition (i');
        \item by taking $\vv{j} = (-n, k)$, we obtain $\braket{\gamma_{n,0}}{\gamma_{0,k}} = 0$, condition (ii');
        \item by taking $\vv{j} = (k, n - k)$, we obtain $\braket{\gamma_{0,0}}{\gamma_{k,n-k}} = 0$, condition (iii').
    \end{itemize}
    Thus, by Theorem~\ref{thm:condition-for-single-step}, there exists a unitary $A_n$ such that $\Tilde{W}^\dag A_n^\dag \ket{\gamma(a, b)}$ has degree $n - 1$. In order to conclude the induction step we only need to prove that $1, a^{n-1}, b^{n-1}$ have a non-zero coefficient in the new polynomial. Consider $\ket{\gamma'(a, b)} = A_n^\dag \ket{\gamma(a, b)}$. By linearity, $A_n^\dag$ is simply applied to each $\ket{\gamma_{k, h}}$, thus the three endpoints are still non-zero after the application of~$A_n^\dag$. By considering $\ket{\gamma'(a, b)} = P(a,b) \ket{0} + Q(a,b) \ket{1} + R(a,b) \ket{2}$, since the degree of $\ket{\gamma'(a, b)}$ will be lowered by $\Tilde{W}^\dag$, this means that
    \begin{itemize}
        \item only $P$ has the constant term, $\ket{\gamma_{0,0}} \propto \ket{0}$;
        \item only $Q$ has the $a^n$ term, $\ket{\gamma_{n,0}} \propto \ket{1}$;
        \item only $R$ has the $b^n$ term, $\ket{\gamma_{0,n}} \propto \ket{2}$.
    \end{itemize}
    As $\Tilde{W}^\dag$ simply shifts these coefficients, we can conclude that the polynomial $\Tilde{W} \ket{\gamma'(a, b)} = \Tilde{W}^\dag A_n^\dag \ket{\gamma(a, b)}$ has degree $n - 1$, with the coefficients of $1, a^{n-1}, b^{n-1}$ being non-zero. This concludes the proof by applying the induction step.
\end{proof}
\noindent Here we treat the case of two variables for simplicity, but it is possible to define a protocol on a $(d+1)$-level system and $d$ variables (we leave the full argument in Appendix~\ref{apx:proof-m-variables} for completeness).

\begin{example}
    Let $\omega = e^{2\pi i/3}$ be the cube root of unity and given $\vec{r} = (r_1, r_2, \ldots, r_n) \in \Z_3^n$, we define its Fourier path as
    \begin{align*}
        f(\vec{r}) = \omega^{r_n r_{n-1} + r_{n-1} r_{n-2} + \cdots + r_2 r_1}
    \end{align*}
    Let $S_{k, h}$ be the set of all paths that have $k$ occurrences of $1$ and $h$ occurrences of $2$. We can define the triple of polynomials:
    \begin{align*}
        P(a, b) & = \sum_{k + h \le n} \qty(\sum_{\vec{r} \in S_{k, h}} f(\vec{r})) a^k b^h \\
        Q(a, b) & = \sum_{k + h \le n} \qty(\sum_{\vec{r} \in S_{k, h}} f(\vec{r}) \omega^{r_n}) a^k b^h \\
        R(a, b) & = \sum_{k + h \le n} \qty(\sum_{\vec{r} \in S_{k, h}} f(\vec{r}) \omega^{2 r_n}) a^k b^h
    \end{align*}
    In other words, the coefficients of $a^k b^h$ are summing the Fourier paths over all the possible choices containing $k$ 1's and $h$ 2's (the phase difference between $P, Q, R$ can represent the $(n+1)$-th choice).
    An induction argument on the paths can show that these polynomials sum up to $1$ (up to an omitted multiplicative constant). Moreover, the sets $S_{n,0}, S_{0,n}$ and $S_{0,0}$ contain exactly one path, and therefore this triple satisfies the conditions of Theorem~\ref{thm:sufficient-condition-full-decomposition}, which then gives an instance of Protocol~\ref{def:three-dimensional-qsp}. It turns out that the processing operator are equal to the quantum Fourier transform over $\Z_3$, while $A_0$ prepares the state $(1, \omega^2, \omega^2)/\sqrt{3}$.
\end{example}
\subsection{Inapproximability}
\label{sec:three-dimensional-inapproximability}
\noindent Another result we can derive is a proof that some polynomial states are not only impossible to be implemented exactly, but they are even impossible to approximate with good precision by a QSP protocol.

\begin{theorem}
    \label{thm:inapproximability-bounds}
    Let $\ket{\gamma(a, b)}$ be a two-dimensional polynomial state of degree~$n$. Consider:
    \begin{align*}
        q(\gamma) = \min\left\{ \max_{0 \le x, y \le n}\bigg| \det\bigg[\ket{\gamma_{x, 0}} \ \ket{\gamma_{y, 0}}\bigg] \bigg|, \max_{0 \le x, y \le n} \bigg| \det\bigg[\ket{\gamma_{0, x}} \ \ket{\gamma_{0, y}}\bigg] \bigg| \right\}\ .
    \end{align*}
    Any polynomial state $\ket{\gamma'(a, b)}$ satisfying $\lVert \ket{\gamma(a, b)} - \ket{\gamma'(a, b)} \rVert < q(\gamma)/2$ for any $a, b \in \T$ cannot be implemented.
\end{theorem}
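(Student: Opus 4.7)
The plan is to assume for contradiction that $\ket{\gamma'(a,b)}$ admits an implementation via Protocol~\ref{def:three-dimensional-qsp} and derive a contradiction from Theorem~\ref{thm:condition-for-single-step} combined with the closeness to $\ket{\gamma(a,b)}$. Intuitively, this is a stability version of Corollary~\ref{thm:sufficient-condition-unimplementability}: the orthonormal ``kernel'' vectors $\ket{\psi_1'}, \ket{\psi_2'} \in \C^3$ forced to exist at the final extractable step will both be pushed very close to $\ket{2}$, contradicting $\braket{\psi_1'}{\psi_2'} = 0$. The first step is to upgrade the uniform sup bound to a coefficient-wise one: the standard contour-integral extraction of Fourier coefficients gives $\lVert \ket{\gamma_{k,h}} - \ket{\gamma'_{k,h}} \rVert < q(\gamma)/4$ for every $k, h \geq 0$, after embedding $\ket{\gamma}$ into $\C^3$ with vanishing $\ket{2}$-component (and interpreting $\ket{\gamma'_{k,h}}$ as zero whenever it lies beyond the degree of $\ket{\gamma'}$).

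Since $\ket{\gamma'}$ is implementable, Theorem~\ref{thm:condition-for-single-step} applied at its top degree yields an orthonormal basis $\ket{\psi_0'}, \ket{\psi_1'}, \ket{\psi_2'}$ of $\C^3$ with $\braket{\psi_2'}{\gamma'_{k,0}} = \braket{\psi_1'}{\gamma'_{0,k}} = 0$ for every $k$. Writing $\ket{\psi_2'} = \tilde{\psi}_2 + c_2 \ket{2}$ with $\tilde{\psi}_2 \in \vspan{\ket{0},\ket{1}}$ and using that $\ket{\gamma_{k,0}}$ has no $\ket{2}$-component, we obtain $|\braket{\tilde{\psi}_2}{\gamma_{k,0}}| = |\braket{\psi_2'}{\gamma_{k,0} - \gamma'_{k,0}}| < q(\gamma)/4$. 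Specializing to indices $x_A, y_A$ that achieve the $a$-side maximum in the definition of $q(\gamma)$, the $2 \times 2$ matrix $M$ with rows $\bra{\gamma_{x_A,0}}, \bra{\gamma_{y_A,0}}$ satisfies $|\det M| \geq q(\gamma)$ and has Frobenius norm at most $\sqrt{2}$ (Fourier coefficients of a normalized polynomial state have norm at most $1$). Its smallest singular value is therefore at least $q(\gamma)/\sqrt{2}$, and together with $\lVert M \tilde{\psi}_2 \rVert_\infty < q(\gamma)/4$ this forces $\lVert \tilde{\psi}_2 \rVert < 1/2$, hence $|c_2| > \sqrt{3}/2$.

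The symmetric argument on the $b$-axis, using the indices achieving the $b$-side maximum and writing $\ket{\psi_1'} = \tilde{\psi}_1 + c_1 \ket{2}$, yields $\lVert \tilde{\psi}_1 \rVert < 1/2$ and $|c_1| > \sqrt{3}/2$. Orthonormality $\braket{\psi_1'}{\psi_2'} = 0$ then forces $|c_1^* c_2| = |\braket{\tilde{\psi}_1}{\tilde{\psi}_2}| \leq \lVert \tilde{\psi}_1 \rVert \cdot \lVert \tilde{\psi}_2 \rVert < 1/4$, which contradicts $|c_1 c_2| > 3/4$. The main obstacle is purely constant-chasing: the threshold $q(\gamma)/4$ must absorb the $\sqrt{2}$ slack in the singular-value estimate for $M$ and still leave enough margin so that $\lVert \tilde{\psi}_j \rVert^2 < 1 - (\sqrt{3}/2)^2$ closes the orthogonality contradiction. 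No deeper difficulty arises, because the degree $n'$ of $\ket{\gamma'}$ is irrelevant: Fourier coefficients above $n'$ vanish, making orthogonality there automatic, while for $k \leq n$ the bound from the first step is the same regardless of whether $k > n'$ or not.
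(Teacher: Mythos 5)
Your proof is correct, but it takes a genuinely different route from the paper's. The paper proves the statement as a \emph{stability} result for the invariant $q$: it uses Parseval's identity to pass from the $\sup$-distance on $\T^2$ to the $\ell^2$-distance on Fourier coefficients, then shows by a four-fold triangle inequality on the $2\times 2$ determinant that $|q(\gamma) - q(\gamma')| < 4\epsilon$, so that $\epsilon = q(\gamma)/4$ forces $q(\gamma') > 0$; non-implementability of $\gamma'$ is then an immediate application of Corollary~\ref{thm:sufficient-condition-unimplementability}. Your argument instead re-derives the obstruction quantitatively from first principles: you assume an implementation of $\gamma'$, pull the orthonormal kernel basis $\ket{\psi_1'},\ket{\psi_2'}$ out of Theorem~\ref{thm:condition-for-single-step}, and use a smallest-singular-value estimate on the $2\times 2$ Gram-like matrix $M$ (via $\sigma_{\min}\ge |\det M|/\lVert M\rVert_F \ge q(\gamma)/\sqrt{2}$) to force both kernel vectors to within $1/2$ of $\ket{2}$, so that orthogonality $\braket{\psi_1'}{\psi_2'}=0$ fails; your constant chase indeed closes at exactly the threshold $q(\gamma)/4$. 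The paper's version is more modular (it treats $q$ as a $4$-Lipschitz obstruction functional and invokes the qualitative corollary as a black box), while yours is self-contained and makes the geometric mechanism visible — the two vectors that ought to be orthogonal are both squeezed near the same axis — at the cost of the singular-value bookkeeping. The contour-integral versus Parseval step is an inessential difference; both deliver the same coefficient-wise bound. One small caveat worth making explicit in your writeup: Theorem~\ref{thm:condition-for-single-step} gives the vanishing inner products only up to the degree $n'$ of $\gamma'$, and you need them for indices up to $n$; you correctly observe that coefficients beyond $n'$ vanish, so the extension is free, but this should be stated rather than left implicit.
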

\noindent This also gives a more practical method to check conditions of Theorem~\ref{thm:sufficient-condition-unimplementability}. Indeed, if $q(\gamma) > 0$ then both spans have dimension $2$, and the polynomial cannot be implemented. However, this does not exclude that some non-implementable polynomials have $q(\gamma) = 0$ (indeed, taken any such polynomial $\ket{\gamma(a, b)}$, then $\ket{\gamma'(a, b)} = A \Tilde{W} \ket{\gamma(a, b)}$, albeit impossible to construct, satisfies $q(\gamma') = 0$ regardless of $q(\gamma)$).
\begin{proof}
    We show that such $\ket{\gamma'(a, b)}$ has $q(\gamma') > 0$. Assuming that $\sup_{a, b \in \T} \lVert \ket{\gamma(a, b)} - \ket{\gamma'(a, b)} \rVert^2 < \epsilon^2$, we also have:
    \begin{align*}
        \frac{1}{(2\pi)^2} \iint_0^{2\pi} \big\lVert \ket{\gamma(e^{i\theta}, e^{i\eta})} - \ket{\gamma'(e^{i\theta}, e^{i\eta})} \big\rVert^2 d\theta \ d\eta < \epsilon^2
    \end{align*}
    Using Parseval's identity~\cite{steinFourierAnalysisIntroduction2011}, we conclude that
    \begin{align*}
        \sum_{k, h} \big\lVert \ket{\gamma'_{k,h}} - \ket{\gamma_{k,h}} \big\rVert^2 < \epsilon^2
    \end{align*}
    and, in particular, this inequality holds for each term of the sum. By the fact that $\abs{\det[a, b]} \le \norm{a} \norm{b}$ (direct consequence of the Cauchy-Schwarz inequality), and that $\det[a, b] - \det[a, c] = \det[a, b-c]$, along with a standard application of the triangle inequality, we obtain, for $0 \le x, y \le n$:
    \begin{align*}
        \bigg| \det\bigg[\ket{\gamma_{x, 0}} \ \ket{\gamma_{y, 0}}\bigg] - \det\bigg[\ket{\gamma'_{x, 0}} \ \ket{\gamma'_{y, 0}}\bigg] \bigg| < 2\epsilon \\
        \bigg| \det\bigg[\ket{\gamma_{0, x}} \ \ket{\gamma_{0, y}}\bigg] - \det\bigg[\ket{\gamma'_{0, x}} \ \ket{\gamma'_{0, y}}\bigg] \bigg| < 2\epsilon
    \end{align*}
    This implies that $|q(\gamma) - q(\gamma')| < 2\epsilon$ and, in particular, $q(\gamma') > 0$ if we take $\epsilon = q(\gamma)/2$.
\end{proof}
\noindent By a simple calculation, the counterexample shown in Section~\ref{sec:three-dimensional-nec-suf-one-step}, has $q(\gamma) = 144/10625 \simeq 0.013$.
\section{Discussion and conclusions}

In this work we consider the problem of multivariate quantum signal processing. Instead of previous work~\cite{rossiMultivariableQuantumSignal2022}, where the signal ($a$ or $b$) to ``blend'' at each step is chosen classically, we define here a protocol acting on a three-dimensional space, where essentially the choice of the variable can be made in superposition, which makes the evolution easier to understand and to decompose.

While Protocol~\ref{def:three-dimensional-qsp} produces three-dimensional polynomials in general, Conjecture~\ref{thm:conjecture-bidirectional-3d-2d} states that, whenever our desired polynomial has effective dimension $2$, then a construction using three dimensions can be turned into a bi-dimensional one (perhaps by increasing its length by at most a factor of $2$). While we show a possible direction of the proof, its gist requires a deeper understanding of what we can do with Protocol~\ref{def:three-dimensional-qsp}. We also remark that, whereas Conjecture~\ref{thm:conjecture-bidirectional-3d-2d} would turn the protocol into a single-qubit one, even if the conjecture turns out to be false, Protocol~\ref{def:three-dimensional-qsp} can still be implemented with two qubits, although we would need to implement general $SU(3)$ unitaries (it is possible to obtain a phase factor decomposition using $8$ rotations by, e.g., the Euler decomposition by the Gell-Mann matrices~\cite{roelfsGeometricInvariantDecomposition2022}, or the Sinkhorn normal form~\cite{idelSinkhornNormalForm2015}). Moreover, we remind that Laurent polynomials are always obtainable from analytic polynomials even in this case, by shifting all the coefficients with an operator $a^{-k} b^{-k} \id$ (in applications such as quantum eigenvalue transformation, this is possible by calling the unitaries for $a, b$ unconditionally --- this trick was also exploited in~\cite{laneveQuantumSignalProcessing2024,berryDoublingEfficiencyHamiltonian2024a}).

We then state necessary or sufficient conditions for decomposability into Protocol~\ref{def:three-dimensional-qsp}: we first give a necessary and sufficient condition (Theorem~\ref{thm:condition-for-single-step}) for the existence of a processing operator $A \in SU(3)$ that lowers the degree of the polynomial (usually needed in induction arguments which aim at finding a decomposition). These conditions come from the intuition that $m$-variable M-QSP protocols can be seen as a propagation in a $m$-dimensional lattice, where each point represents a coefficient of the multivariate polynomial.

These conditions are used to prove a sufficient condition for decomposability, namely that any $n$-degree polynomial having a non-zero constant term, as well as non-zero $a^n, b^n$ terms, can be constructed using Protocol~\ref{def:three-dimensional-qsp}. This is the first constructive result ever proven for a multivariate polynomial in the context of quantum signal processing. This proves that, in some sense, the behaviour of multivariate polynomials closely follows the univariate case, with the only exception that ``degenerate'' univariate schemes (the ones that do not reach the $n$-th degree term with $n$ steps) simply produce polynomials with degree $< n$, while multivariate polynomials without the $a^n$ term are not necessarily of degree $< n$. With similar arguments, an analogous result for polynomials in $d$ variables and a $d+1$ dimensional protocol is given in Appendix~\ref{apx:proof-m-variables}, which also gives a sufficient condition for \emph{homogeneous quantum signal processing} with general number of variables, a problem left open in~\cite{nemethVariantsMultivariateQuantum2023} (indeed, in order to obtain the homogeneous version we simply replace the $1$ in the signal operator with a new variable). We remark that this condition is certainly not necessary, since the polynomial $(1, ab, 0)/\sqrt{2}$ can be easily implemented in two steps (indeed, any polynomial state with effective dimension $2$ cannot satisfy this condition).

We conclude the work by giving a simpler, easy-to-check, necessary condition for decomposability of a bivariate polynomial, involving a quantity $q(\gamma)$. We use such a simpler condition to prove that some polynomials are even hard to approximate with arbitrary precision. This confirms that the set of non-constructible polynomials has non-zero measure, formally proving a claim that was shown numerically in~\cite{nemethVariantsMultivariateQuantum2023}.

A natural question would be whether a multivariate QSVT could arise directly from (a restriction of) Protocol~\ref{def:three-dimensional-qsp}, without having to resort to Conjecture~\ref{thm:conjecture-bidirectional-3d-2d}, perhaps with a three-dimensional extension of the cosine-sine decomposition~\cite{tangCSGuideQuantum2023}. Another question left open is to understand whether there is an extension of the Fej\'er-Riesz theorem~\cite{geronimoPositiveExtensionsFejerRiesz2004,hussenFejerRieszTheoremIts2021} (used also for polynomial completion in univariate QSP~\cite{haahProductDecompositionPeriodic2019} and single-qubit multivariate QSP~\cite{rossiMultivariableQuantumSignal2022}) allowing to complete a given polynomial $P(a,b)$ with a pair of polynomials $Q(a,b), R(a,b)$ satisfying the sufficient condition of Theorem~\ref{thm:sufficient-condition-full-decomposition}. Moreover, it is a possible future direction to understand whether the above conditions can be extended to the case of non-commutative signals, a case relevant for quantum eigenvalue transformation involving a set of non-commuting unitaries. We hope these results shed some light on M-QSP, giving directions for future works towards finding a full characterization of the multivariate polynomials.

% any polynomial state of degree 2 can be implemented

\section*{Acknowledgements}
We would like to thank Yuki Ito for useful feedback on Theorem~\ref{thm:inapproximability-bounds}. The authors acknowledge support from the Swiss National Science Foundation (SNSF), project
No.\ 200020-214808.

\bibliographystyle{quantum}
\bibliography{refs}

\appendix
\section{Proof for more than two variables}
\label{apx:proof-m-variables}

For completeness, we provide a proof for Theorems~\ref{thm:condition-for-single-step} and~\ref{thm:sufficient-condition-full-decomposition} in the case of $m \ge 3$ variables. Analogously to what we did in two variables, we define a $(m+1)$-dimensional protocol, where the signal operator is
\begin{align*}
    \Tilde{W} = \diag(1, z_1, \ldots, z_m)
\end{align*}
and the processing operators are $A_k \in SU(m+1)$.
\begin{theorem}
    \label{thm:condition-for-single-step-mvars}
    Let $\ket{\gamma(\vv{z})} = \sum_{\vv{k}} \ket{\gamma_{\vv{k}}} \vv{z}^{\vv{k}}$ be a polynomial state of degree~$n$ in $m$ variables and $m+1$ dimensions. There exists $A_n \in SU(m+1)$ such that $\Tilde{W}^\dag A_n^\dag \ket{\gamma(\vv{z})}$ has degree $n - 1$ if and only if there exists an orthonormal basis $\{ \ket{\psi_j} \}_{0 \le j \le m}$ such that
    \begin{enumerate}[label=(\roman*)]
        \item $\braket{\psi_j}{\gamma_\vv{k}} = 0$, for $1 \le j \le m$, and every $\vv{k} \ge 0$ such that $k_j = 0$;
        \item $\braket{\psi_0}{\gamma_{\vv{k}}} = 0$ for every $\vv{k} \ge 0$ such that $\sum_j k_j = n$.
    \end{enumerate}
    If the $m+1$ endpoints $\{ \ket{\gamma_{\vv{e}_j}} \}_{0 \le j \le m}$ are non-zero, the above conditions are equivalent to
    \begin{enumerate}[label=(\roman*')]
        \item $\braket{\gamma_{n \vv{e}_j}}{\gamma_{\vv{k}}} = 0$, for $1 \le j \le m$, and every $\vv{k} \ge 0$ such that $k_j = 0$;
        \item $\braket{\gamma_{\vv{0}}}{\gamma_{\vv{k}}} = 0$, for every $\vv{k} \ge 0$ such that $\sum_j k_j = n$.
    \end{enumerate}
\end{theorem}
\noindent Condition (i) says that $\ket{\psi_j}$ --- which will be mapped to the subspace where $\Tilde{W}$ multiplies by $z_j$ --- must be orthogonal to the coefficients of the terms where $z_j$ already has degree $0$, as to avoid any negative coefficient. Similarly, the second condition ensures that the subspace that does not lower any degree is orthogonal to the vectors that has the maximum degree $n$, so that they are all lowered and the final polynomial can be of degree $n-1$.
\begin{proof}
    Following the same reasoning of Theorem~\ref{thm:condition-for-single-step}, $\ket{\gamma'(\vv{z})} = \Tilde{W}^\dag A_n^\dag \ket{\gamma(\vv{z})}$ is of degree $n-1$ if and only if $\ket{\gamma'_{\vv{k}}} = 0$ for every $\vv{k}$ of degree $n$, and every $\vv{k}$ that contains negative entries. Therefore, we need $A_n^\dag$ to satisfy
    \begin{itemize}
        \item $\bra{j} A_n^\dag \ket{\gamma_{\vv{k}}} = 0$ for every $1 \le j \le m$ and $\vv{k} \ge 0$ with $k_j = 0$, to avoid negative degrees;
        \item $\bra{0} A_n^\dag \ket{\gamma_{\vv{k}}} = 0$ for every $\vv{k} \ge 0$ with $\sum_j k_j = n$, to reduce the degree.
    \end{itemize}
    By conditions (i)-(ii), the choice $A_n : \ket{j} \mapsto \ket{\psi_j}$ suffices. The endpoints $\ket{\gamma_{\vv{0}}}, \ket{\gamma_{n\vv{e}_j}}$ need to satisfy $m$ out of the $m+1$ conditions, thus implying $\ket{\gamma_{n\vv{e}_j}} \propto \ket{\psi_j}$, while $\ket{\gamma_{\vv{0}}} \propto \ket{\psi_0}$, implying that (i)-(ii) and (i')-(ii') are equivalent whenever these $m+1$ endpoints are non-zero.
\end{proof}

\begin{theorem}
    \label{thm:sufficient-condition-full-decomposition-mvars}
    Let $\ket{\gamma(\vv{z})} = \sum_{k, h} \ket{\gamma_{\vv{k}}} \vv{z}^{\vv{k}}$ be a polynomial state of degree~$n$ in $m$ variables and $(m+1)$-dimensions such that the coefficient of $1, z_1^n, \ldots, z_m^n$ are all non-zero. Then there exist $A_0, \ldots, A_n \in SU(m+1)$ such that
    \begin{align*}
        A_n \Tilde{W} A_{n-1} \Tilde{W} \cdots \Tilde{W} A_0 \ket{0} = \ket{\gamma(\vv{z})}\ .
    \end{align*}
\end{theorem}
\begin{proof}
    For $n = 0$ the claim is trivial, since the polynomial state is a constant quantum state and can be chosen as $A_0 \ket{0}$.

    For $n > 0$, since the endpoints are non-zero, the conditions (i')-(ii') of Theorem~\ref{thm:condition-for-single-step-mvars} must hold in order for a degrading $A_n$ to exist. By writing down the normalization condition we obtain:
    \begin{align*}
        \braket{\gamma(\vv{z})}{\gamma(\vv{z})} = \sum_{\vv{k}, \vv{h}} \braket{\gamma_{\vv{k}}}{\gamma_{\vv{h}}} \vv{z}^{\vv{h} - \vv{k}} = \sum_{\vv{h}} \sum_{\vv{k}} \braket{\gamma_{\vv{k}}}{\gamma_{\vv{k} + \vv{h}}} \vv{z}^{\vv{h}} \stackrel{!}{\equiv} 1
    \end{align*}
    The coefficient of $\vv{z}^{\vv{h}}$ for $\vv{h} \neq 0$ must be zero. In particular
    \begin{itemize}
        \item by taking $\vv{h} = \vv{k}' - n \vv{e}_j$ for $\vv{k}' \ge 0$ with $k'_j = 0$, we get $\braket{\gamma_{n\vv{e}_j}}{\gamma_{\vv{k}'}} = 0$, condition (i');
        \item by taking $\vv{h}$ of degree $n$, we obtain $\braket{\gamma_{\vv{0}}}{\gamma_{\vv{h}}} = 0$, condition (ii').
    \end{itemize}
    This yields a $A_n$ that lower the degree of $\ket{\gamma'(\vv{z})} = \Tilde{W}^\dag A_n^\dag \ket{\gamma(\vv{z})}$, by Theorem~\ref{thm:condition-for-single-step-mvars}. Moreover, $\ket{\gamma'(\vv{z})}$ must have a non-zero coefficient in $z_j^{n-1}$ for each $j$, otherwise $A_n \Tilde{W} \ket{\gamma'(\vv{z})}$ would not be able to produce a term of degree $z_j^n$. The same reasoning applied to the constant term, and the claim follows by the induction step.
\end{proof}

\end{document}